\DeclareMathOperator{\rt}{rt}
\DeclareMathOperator{\wg}{wg}
\DeclareSymbolFont{rsfscript}{OMS}{rsfs}{m}{n}
\DeclareSymbolFontAlphabet{\mathrsfs}{rsfscript}
\renewcommand*{\@fnsymbol}[1]{\ifcase#1\or*\else $\dagger$\fi}
\begin{document}

\title{Synchronizing automata and principal eigenvectors of the underlying digraphs}
\titlerunning{Synchronizing automata and principal eigenvectors of the underlying digraphs}

\author{Vladimir V. Gusev\inst{1}\thanks{Authors were supported 
by the Russian foundation for basic research (grant 13-01-00852), Ministry of Education and Science of the Russian Federation (project no.\ 1.1999.2014/K), Presidential Program for Young Researchers (grant MK-3160.2014.1) and the Competitiveness Program of Ural Federal University.}\thanks{The author was also supported by the Belgian Programme on
Interuniversity Attraction Poles.}
\and
Elena V. Pribavkina\inst{2}\footnotemark[1]}

\authorrunning{V. V. Gusev, E. V. Pribavkina}

\tocauthor{V. V. Gusev, E. V. Pribavkina}

\institute{ICTEAM Institute, Universit{\'e} catholique de Louvain, Belgium,\\
Ural Federal University, Russia\\
\email{vl.gusev@gmail.com}
\and Institute of Mathematics and Computer Science,\\
Ural Federal University, Ekaterinburg, Russia\\
\email{elena.pribavkina@gmail.com}}

\maketitle

\begin{abstract}
A coloring of a digraph with a fixed out-degree $k$ is a distribution of $k$ labels over the edges resulting in a deterministic finite automaton.
An automaton is called synchronizing if there exists a word which sends all states of the automaton to a single state. In the present paper we study connections between spectral and synchronizing properties of digraphs. We show that if a coloring of a digraph is not synchronizing, then the stationary distribution of an associated Markov chain has a partition of coordinates into blocks of equal sum. Moreover, if there exists such a partition, then there exists a non-synchronizing automaton with such stationary distribution. 
We extend these results to bound the number of non-synchronizing colorings for digraphs with particular eigenvectors.
We also demonstrate that the length of the shortest synchronizing word of any coloring is at most $w^2 - 3w + 3$, where $w$ is the sum of the coordinates of the integer dominant eigenvector of the digraph.
\end{abstract}

\section{Introduction}
Let $\mathrsfs{A}=(Q,\Sigma,\delta)$ be a deterministic
finite automaton over a finite alphabet $\Sigma$ with the set of states $Q$
and the transition function $\delta: Q\times\Sigma\mapsto Q$. The function $\delta$ naturally extends to the subsets of $Q$, and the the free monoid $\Sigma^*$. When $\delta$ is clear from the context we will write $qw$ for $\delta(q,w)$ where $q$ is a state and $w$ is a word over the alphabet $\Sigma$. The automaton $\mathrsfs{A}$ is called \emph{synchronizing}
if there exist a word $w$ and a state $p$ such that for every state $q \in Q$ we have
$qw=p$. Any such word $w$ is called \emph{synchronizing} (or \emph{reset})
word for $\mathrsfs{A}$. The length of the shortest synchronizing word $\rt(\mathrsfs{A})$ is called \emph{the reset threshold} of $\mathrsfs{A}$.
A good survey of the theory of synchronizing automata may be found in~\cite{Volkov2008Survey}. One of the oldest problems in this area is the \v{C}ern\'{y} conjecture that states that the reset threshold of an $n$-state automaton is at most $(n-1)^2$~\cite{Cerny1964}. Our results will have some implications for the conjecture.

\emph{The underlying digraph} of an automaton $\mathrsfs{A}$ is a digraph  with $Q$ as a set of vertices, and whose set of edges consists of pairs $(u,v)$ such that $\delta(u,x)=v$ for some letter $x\in\Sigma$. Note that, such a digraph has a fixed out-degree (equal to the cardinality of the alphabet $\Sigma$), loops and multiple edges are allowed. Vice versa, given a digraph $G$ with a fixed out-degree $k$ and a finite alphabet $\Sigma$ with $k$ letters, we can obtain a deterministic finite automaton by distributing the letters of $\Sigma$ over the edges of $G$. Any automaton obtained in this way is called \emph{a coloring} of $G$. A digraph is \emph{primitive} if there exists a number $t$ such that for any two vertices $u$ and $v$ there exists a path from $u$ to $v$ of length exactly $t$. An automaton is \emph{strongly connected} if its underlying digraph is strongly connected. It is well known that the underlying digraph of a strongly connected synchronizing automaton is primitive. Furthermore, the converse is also true, every primitive digraph has a synchronizing coloring~\cite{Tr2009RoadColoring}. Previously, the latter statement was a conjecture known under the name ``the road coloring problem''. It was open for 30 years and received a lot of attention before it was finally resolved by Trahtman~\cite{Tr2009RoadColoring} after the crucial insight by Culik, Karhum\"{a}ki, and Kari~\cite{CuKaKa2002}. Recently, there were suggested two directions to generalize the road coloring theorem~\cite{GuSz2015}. The first of them is related to the synchronizing properties of a random primitive digraph of a fixed out-degree. A digraph is called \emph{totally synchronizing} if all of its colorings are synchronizing. Relying on computational experiments the authors of~\cite{GuSz2015} conjectured that a random digraph is totally synchronizing. The other direction involves the fraction of synchronizing colorings among all possible colorings. The same authors conjectured that this fraction is at least $\frac{k-1}{k}$ for a primitive digraph with a fixed out-degree $k$. If both of these conjectures are true, then the road coloring theorem is a relatively weak statement that gives us just the first step to satisfactory understanding of the synchronizing properties of digraphs.

One of the major difficulties in dealing with the colorings of digraphs is checking whether they are synchronizing or not. In the present paper we rely on the integer dominant eigenvector of a digraph to analyze its colorings. We show that a $k$-out-regular digraph is totally synchronizing if coordinates of the integer dominant eigenvector cannot be partitioned into blocks of equal sum. Moreover, the eigenvector $\vec{w}$ is not partitionable if and only if all the digraphs which have $\vec{w}$ as the eigenvector are totally synchronizing. Furthermore, if the partition of the coordinates is unique, then the fraction of synchronizing colorings is at least $\frac{k-1}{k}$, i.e. the generalized road coloring problem holds true in this case. Furthermore, we demonstrate that if the sum of the coordinates is equal to $w$, then the reset threshold is at most $w^2 - 3w + 3$ which has some implications for the \v{C}ern\'{y} conjecture. We believe that our techniques can be used to advance in the generalizations of the road coloring theorem. Furthermore, they shed light on the intriguing connections between spectral and synchronizing properties of digraphs. 

\section{The eigenvectors of synchronizing automata}
Let $\mathrsfs{A}$ be a strongly connected automaton with the set of states $\{1,2, \ldots, n\}$. Let $A_1, A_2, \ldots, A_k$ be the adjacency matrices of the letters of $\mathrsfs{A}$, i.e. $A_\ell[i,j] =1$ if and only if $i$ is mapped to $j$ under the action of $\ell$th letter. Consider a matrix $A=\sum_{i=0}^{k} p_i A_i$, where $p_i >0$ are rational for all $i$ and $\sum_{i=0}^{k} p_i = 1$. Since the matrix $A$ is row-stochastic the largest eigenvalue of $A$ is equal to 1. By the Perron-Frobenius theorem there exists a left positive eigenvector $\vec{u}$ such that $\vec{u}A=\vec{u}$. Since the entries of $A$ are rational so are the entries of $\vec{u}$. Let $\vec{w} = \ell \vec{u}$, where $\ell$ is the least common multiple of the denominators of entries of $\vec{u}$. We will call the vector $\vec{w}$ \emph{the eigenvector} of $\mathrsfs{A}$ in accordance with the distribution $p_1, \ldots, p_k$. If the distribution is uniform, i.e. $p_1= p_2 =\ldots =p_k$, then we will usually omit its description. Since all colorings of a digraph $G$ have the same eigenvector $\vec{w}$ in accordance with the uniform distribution we will call $\vec{w}$ obtained in this way \emph{the eigenvector} of $G$.

The following theorem by Friedman~\cite{Friedman1990OnTheRoadColoringProblem} is the starting point of our paper. Given an automaton $\mathrsfs{A}=(Q,\Sigma,\delta)$, \emph{a kernel} of a word $x$ with respect to $\mathrsfs{A}$ is an equivalence relation $\rho$ on the set of states $Q$ such that $i \rho j$ if and only if $ix = jx$. A subset $S$ is \emph{synchronizing} if there exists a word $x$ such that $Sx=1$. By $Sx^{-1}$ we denote the full preimage of the set $S$ under the action of a letter $x$, i.e. $Sx^{-1}=\{q\in Q\mid qx\in S\}$. Let $\vec{w}$ be the eigenvector of the automaton $\mathrsfs{A}$. We define \emph{the weight} $\wg (i)$ of a state $i$ as $\vec{w}[i]$. The weight of a set $S$ is defined as $\wg(S) = \sum_{i \in S} \wg(i)$.
\begin{theorem}
\label{th:friedman}
Let $\vec{w}$ be the eigenvector of an automaton $\mathrsfs{A}$.
There is a partition of the states of $\mathrsfs{A}$ into synchronizing subsets of maximal weight. Furthermore, this partition is equal to the kernel of some word $x$.
\end{theorem}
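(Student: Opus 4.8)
\emph{Proof idea.} Put $W:=\max\{\wg(S)\mid S\subseteq Q\text{ synchronizing}\}$; this maximum exists since singletons are synchronizing and all weights are positive. For any word $x$ the kernel of $x$ splits $Q$ into the sets $\{t\}x^{-1}$, $t\in Qx$, each of which is synchronizing (it is sent to $t$ by $x$), hence of weight $\le W$, and these weights sum to $\wg(Q)$. So it is enough to find a single word $x$ all of whose kernel classes have weight exactly $W$; I would look for it among the words $x$ of minimal \emph{rank} $|Qx|$.

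The eigenvector equation, written coordinatewise and summed over $S$, gives the averaging identity $\wg(S)=\sum_{\ell}p_\ell\,\wg(S\ell^{-1})$ for every $S\subseteq Q$, where $p_\ell>0$, $\sum_\ell p_\ell=1$ define $\vec w$ (the sets $\{t\}\ell^{-1}$, $t\in S$, are disjoint with union $S\ell^{-1}$). Since $S\ell^{-1}$, if nonempty, is again synchronizing (if $Sy$ is a singleton then from $(S\ell^{-1})\ell\subseteq S$ it follows that $(S\ell^{-1})(\ell y)$ is a nonempty subset of $Sy$, hence a singleton), the identity forces a closure property: if $\wg(S)=W$, then no summand exceeds $W$ and none vanishes (a vanishing one would give $\wg(S)<W$), so all equal $W$; iterating over the letters of a word $u$ yields $\wg(Su^{-1})=W$ for every $u$. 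Thus the family $\mathcal M$ of maximum-weight synchronizing sets is closed under full preimages. Using strong connectivity, I would then show that every state $q$ is the target of a member of $\mathcal M$: take $S_0\in\mathcal M$ with $S_0u_0=\{s_0\}$, a word $z$ with $s_0z=q$, and note $\{q\}(u_0z)^{-1}\supseteq S_0$; this set is synchronizing of weight $\ge W$, so it lies in $\mathcal M$.

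Let $r$ be the minimal rank of a word, and among rank-$r$ words pick $x$ maximizing the number $j$ of its kernel classes lying in $\mathcal M$ (the \emph{heavy} classes). I claim all $r$ classes of $x$ are then heavy. If not, let $E$ be a light class, pick $s^*\in E$, and pick $w$ with $\{s^*\}w^{-1}\in\mathcal M$. Since $x$ has minimal rank, the standard argument shows that $wx$ again has rank $r$ and that its kernel classes are exactly the sets $E_iw^{-1}$, where $E_1,\dots,E_r$ are the classes of $x$ (in particular $Qw$ meets each $E_i$, so no class is lost). Each heavy $E_i$ gives a heavy class $E_iw^{-1}\in\mathcal M$ by preimage-closure, and in addition $Ew^{-1}$ — which is not among those, as $E$ is light — contains $\{s^*\}w^{-1}\in\mathcal M$, hence has weight $W$ too. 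So $wx$ is a rank-$r$ word with at least $j+1$ heavy classes, contradicting the choice of $x$. Therefore every kernel class of $x$ has weight $W$, and the kernel of $x$ is the desired partition.

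I expect the extremal step to be the crux: it requires isolating the right monovariant (the number of maximum-weight kernel classes) together with the right family of words (minimal rank) so that prepending a word neither destroys a heavy class nor decreases the number of classes, while the ``every state is a target'' statement supplies the extra heavy class. The one genuinely load-bearing observation inside it is that $Ew^{-1}\supseteq\{s^*\}w^{-1}$ merely because $s^*\in E$, which frees the argument from any hypothesis on the structure of the light classes; everything before the extremal step is essentially dictated by the averaging identity and by strong connectivity.
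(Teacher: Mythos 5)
Your argument is correct. Its first half --- the averaging identity $\wg(S)=\sum_\ell p_\ell\,\wg(S\ell^{-1})$ forcing every (necessarily nonempty) preimage of a maximum-weight synchronizing set to be again a maximum-weight synchronizing set --- is exactly the engine of the paper's proof (the paper proves the statement in the generality of an arbitrary distribution, which you also implicitly handle). Where you diverge is in how the partition is assembled. The paper proceeds greedily: it takes one maximal synchronizing set $S_0$ with synchronizing word $u$, uses strong connectivity to reach an uncovered state $p$, pulls $S_0$ back along $uv$ to get a second member of $\mathcal{M}$, observes that $uvu$ synchronizes both sets to distinct states (so they are disjoint, by maximality of $S_0$), and repeats until $Q$ is exhausted; the final common word's kernel is the partition. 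You instead run an extremal argument over words of minimal rank, maximizing the number of kernel classes lying in $\mathcal{M}$, and use the ``every state is the target of a member of $\mathcal{M}$'' fact (your use of strong connectivity) to manufacture one more heavy class whenever a light one remains. The two completions cost about the same, but yours has the advantage that disjointness and covering come for free --- kernel classes always partition $Q$ --- whereas the paper's ``continuing in the same manner'' leaves the reader to check that the successively constructed preimages stay pairwise disjoint and eventually cover everything. The price you pay is the small amount of bookkeeping verifying that prepending $w$ to a minimal-rank word $x$ preserves the rank and maps kernel classes bijectively to kernel classes, which you carry out correctly.
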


We will prove a generalization of this theorem.
\begin{theorem}
\label{th:partition}
Let $\vec{w}$ be the eigenvector of an automaton $\mathrsfs{A}$ in accordance with a distribution $p_1, p_2, \ldots, p_k$. There exists a partition of the states of $\mathrsfs{A}$ into synchronizing subsets of maximal weight. Furthermore, this partition is equal to the kernel of some word $x$.
\end{theorem}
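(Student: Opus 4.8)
The plan is to follow the proof of Theorem~\ref{th:friedman}, pinpointing the unique place where the uniform distribution enters and checking that it survives for an arbitrary positive distribution. Writing $x_1,\dots,x_k$ for the letters, $A_1,\dots,A_k$ for their matrices and $\vec{w}$ for the eigenvector in accordance with $p_1,\dots,p_k$, we have $\vec{w}\bigl(\sum_{i=1}^{k}p_iA_i\bigr)=\vec{w}$. Comparing the $q$-th coordinates of the two sides and then summing over $q\in S$ yields the weight identity
\begin{equation*}
\wg(S)=\sum_{i=1}^{k}p_i\,\wg\bigl(Sx_i^{-1}\bigr)\qquad\text{for every }S\subseteq Q,
\end{equation*}
where one uses that the preimages $qx_i^{-1}$ of distinct states $q$ are pairwise disjoint. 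Let $W$ be the maximal weight of a synchronizing subset and let $\mathcal{S}$ be the set of synchronizing subsets of weight $W$; since every singleton is synchronizing and $\vec{w}$ is positive, $W>0$ and $\mathcal{S}\neq\emptyset$. If $S$ is synchronizing then so is every $Sx_i^{-1}$ (if $v$ synchronizes $S$, then $x_iv$ synchronizes $Sx_i^{-1}$), so $\wg(Sx_i^{-1})\le W$; if in addition $S\in\mathcal{S}$, the identity writes $W$ as a convex combination \emph{with all coefficients positive} of the numbers $\wg(Sx_i^{-1})\le W$, forcing $\wg(Sx_i^{-1})=W$, i.e.\ $Sx_i^{-1}\in\mathcal{S}$. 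This is the only point at which the distribution is used, and it holds verbatim for any distribution with $p_i>0$; in particular Theorem~\ref{th:friedman} is the special case $p_1=\dots=p_k=\frac1k$. Iterating the last fact, $Sw^{-1}\in\mathcal{S}$ for every word $w$, and all such preimages are non-empty because their weight is $W>0$.

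From here the distribution plays no further role. First, since $\mathrsfs{A}$ is strongly connected, for any state $q$, any $S_0\in\mathcal{S}$ and any $s_0\in S_0$ there is a word $w$ with $qw=s_0$, so $q\in S_0w^{-1}\in\mathcal{S}$; hence every state lies in some member of $\mathcal{S}$. To produce the partition I would choose a word $x$ of minimal rank, i.e.\ with $|Qx|$ minimal over all words. The classes of the kernel of $x$ are precisely the non-empty sets $C_q:=qx^{-1}$ with $q\in Qx$; they partition $Q$, and each $C_q$ is synchronizing since $x$ maps it into $\{q\}$, so $\wg(C_q)\le W$. The theorem follows once we show $\wg(C_q)=W$ for every $q\in Qx$, because then the kernel of $x$ is a partition of $Q$ into synchronizing subsets of maximal weight.

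That last claim is the technical heart, and the step I expect to require the most care; I would prove it by contradiction, exploiting the minimality of the rank. Assume $\wg(C_{q_0})<W$ for some $q_0\in Qx$. Choose $S\in\mathcal{S}$ with $q_0\in S$ and a word $z$ with $Sz=\{p\}$; then $q_0z=p$. By closure under preimages $Sx^{-1}\in\mathcal{S}$, and since $q_0\in S$ we have $C_{q_0}\subseteq Sx^{-1}$; as $\vec{w}$ is positive and $\wg(C_{q_0})<W=\wg(Sx^{-1})$, this inclusion is proper. Also $(Sx^{-1})(xz)\subseteq Sz=\{p\}$, so $Sx^{-1}\subseteq p(xz)^{-1}$. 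Finally, $|(Qx)z|\le|Qx|$ trivially, while $|(Qx)z|=|Q(xz)|\ge|Qx|$ by minimality of the rank; hence equality holds and $z$ is injective on $Qx$, so $q_0$ is the only state of $Qx$ that $z$ maps to $p$, whence $p(xz)^{-1}=\{\,i\in Q:ix=q_0\,\}=C_{q_0}$. Combining, $C_{q_0}\subsetneq Sx^{-1}\subseteq p(xz)^{-1}=C_{q_0}$, a contradiction, which proves the claim and the theorem.
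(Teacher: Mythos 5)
Your proof is correct. The first half --- deriving the identity $\wg(S)=\sum_i p_i\,\wg(Sx_i^{-1})$ from $\vec{w}\bigl(\sum_i p_iA_i\bigr)=\vec{w}$ and using positivity of the $p_i$ to force every preimage of a maximal-weight synchronizing subset to again be one --- is exactly the paper's argument, and you correctly identify this as the only place where the distribution matters. Where you genuinely diverge is in building the partition. The paper proceeds greedily: it takes a maximal synchronizing subset $S_0$ with $S_0u=q$, picks $p\notin S_0$, pulls back along $uv$ where $qv=p$ to get a second maximal synchronizing subset with a different image under $uvu$, and says ``continuing in the same manner'' one exhausts $Q$; the resulting word $x$ is an accumulated product of such pullback words. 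You instead take a word $x$ of minimal rank and prove directly that each kernel class $qx^{-1}$ has maximal weight, via the contradiction $C_{q_0}\subsetneq Sx^{-1}\subseteq p(xz)^{-1}=C_{q_0}$, where the last equality uses that a minimal-rank word's image cannot be further collapsed, so $z$ is injective on $Qx$. Your route buys a cleaner, non-iterative argument in which the witnessing word $x$ is exhibited up front and both the disjointness and the covering of the classes are automatic (they are the fibers of $x$); the paper's route is more constructive about how the word is assembled but leaves the termination and disjointness bookkeeping implicit. Both are valid; I checked each step of your minimal-rank argument (closure of $\mathcal{S}$ under preimages by arbitrary words, every state lying in some member of $\mathcal{S}$ via strong connectivity, and the injectivity of $z$ on $Qx$) and found no gaps.
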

\begin{proof}
Let $\Sigma=\{a_1,a_2,\ldots,a_k\}$, and let $S$ be an arbitrary subset of $Q$. Note the following equality:
$$\sum_{i=0}^{k} p_i \wg( S a_i^{-1}) = \wg(S)$$ (the incoming edges to $S$ bring in total the weight equal to $\wg(S)$, and each preimage brings $p_i \wg( S a_i^{-1})$). If $S$ is a synchronizing subset of maximal weight, then the
weights of preimages are bounded by $\wg(S)$, since every preimage is also a synchronizing subset. Moreover, every preimage has the weight equal to $\wg(S)$, otherwise the left-hand side would be strictly less than the right-hand side. Therefore, if $S$ is a synchronizing subset of maximal weight, then every preimage of $S$ is a synchronizing subset of maximal weight.

We will iteratively construct a partition of the set of states of $\mathrsfs{A}$ into synchronizing subsets of maximal weight. Let $S_0$ be a synchronizing subset of maximal weight. Let $u$ be a word synchronizing $S_0$ to some state $q$: $S_0 u = q$. If $S_0=Q$, then the automaton is synchronizing, and the proof is complete. Otherwise, let $p$ be a state that doesn't belong to $S_0$. Since the automaton $\mathrsfs{A}$ is strongly connected, there exists a word $v$ such that $q v = p$. Consider now the sets $S_1 = S_0(uv)^{-1}$ and $S_0$. Note, that $S_1$ is also a maximal synchronizing subset by the preceding paragraph. Both sets are synchronized by $uvu$. But their images are different, since $q$ is not equal to $pu$ due to maximality of $S_0$. Continuing in the same manner we will eventually construct the desired partition of $Q$. \qed
\end{proof}

The fact that we assign probabilities to the letters of an automaton $\mathrsfs{A}$ in the statement of theorem~\ref{th:partition}, but not to the edges of its underlying digraph, is necessary. Let $\mathrsfs{F}$ be an automaton depicted in Fig.~\ref{fig:flip-flop}. The notation $\ell / p$ means that the edge is labelled by $\ell$ and has the probability $p$. Note, that the eigenvector of $\mathrsfs{F}$ is equal to $(1-p_2,1-p_1)$.
Since every letter acts as a permutation, the automaton $\mathrsfs{F}$ is not synchronizing. Therefore, the partition of the states into synchronizing subsets should be of the form $\{\{0\}, \{1\}\}$, but
for $p_2=\frac{1}{3}$ and $p_1=\frac{1}{2}$ these subsets have different weight.
\begin{figure}[ht]
 \begin{center}
  \unitlength=4pt
    \begin{picture}(20,15)(0,0)
    \gasset{Nw=6,Nh=6,Nmr=6}
\node(A0)(0,8){$0$}
\node(A1)(20,8){$1$}
    \drawedge[curvedepth=4](A0,A1){$b/1-p_1$}
    \drawedge[curvedepth=4](A1,A0){$b/1-p_2$}
    \drawloop[loopangle=180](A0){$a/p_1$}
    \drawloop[loopangle=0](A1){$a/p_2$}
    \end{picture}
 \end{center}
  \caption{An automaton $\mathrsfs{F}$}
  \label{fig:flip-flop}
\end{figure}

\begin{corollary}
Let $\vec{w}$ be the eigenvector of an automaton $\mathrsfs{A}$ in accordance with a distribution $p_1, p_2, \ldots, p_k$. 
If there is no partition of $\vec{w}$ into blocks of equal weight, then the automaton $\mathrsfs{A}$ is synchronizing.
\end{corollary}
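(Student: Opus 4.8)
The plan is to derive the statement immediately from Theorem~\ref{th:partition}. That theorem supplies a partition of the state set $Q$ of $\mathrsfs{A}$ into synchronizing subsets of maximal weight; here ``maximal weight'' means maximal among \emph{all} synchronizing subsets of $\mathrsfs{A}$, so in particular every block of that partition carries one and the same weight, namely this maximum value $m$. Transporting the partition through the weight function $\wg$, we obtain a partition of the coordinates of $\vec{w}$ into blocks each of sum $m$, i.e. a partition of $\vec{w}$ into blocks of equal weight.

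Now I would invoke the hypothesis: $\vec{w}$ admits no partition into blocks of equal weight, read (as is standard) as the absence of any such partition into two or more blocks. Consequently the partition furnished by Theorem~\ref{th:partition} cannot be nontrivial, so it must consist of the single block $Q$. Since every block of that partition is by construction a synchronizing subset, it follows that $Q$ itself is synchronizing, i.e. there is a word $x$ with $\delta(Q,x)$ a single state, which is precisely the assertion that $\mathrsfs{A}$ is synchronizing.

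The only points that require a line of care are the treatment of the degenerate one-block partition (covered by reading the hypothesis as excluding nontrivial equal-weight partitions) and the remark that all blocks produced by Theorem~\ref{th:partition} actually share the common value $m$; both are immediate from the definitions and from the construction in the proof of Theorem~\ref{th:partition} (preimages of a synchronizing subset of maximal weight are again synchronizing subsets of maximal weight). Hence there is no genuine obstacle: the corollary is a direct specialization of Theorem~\ref{th:partition}.
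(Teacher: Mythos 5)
Your argument is correct and is exactly the intended derivation: the paper states this corollary without proof as an immediate consequence of Theorem~\ref{th:partition}, via the observation that the partition into synchronizing subsets of maximal weight has all blocks of the same weight, so under the hypothesis it must be the trivial one-block partition, making $Q$ itself synchronizing. Your care about reading the hypothesis as excluding only nontrivial partitions is the right (and standard) reading.
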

Unfortunately, the converse of this corollary does not hold. Let $\mathrsfs{B}$ be an automaton depicted in Fig.~\ref{fig:B}. It is synchronized by the word $bbaab$ to the state 1. If $p$ and $1-p$ are the probabilities of the letters $a$ and $b$ respectively, then the eigenvector of $\mathrsfs{B}$ is equal to $(1,1,p,p)$. Thus, the subsets $\{0,2\}$ and $\{1,3\}$ form a partition of the states of $\mathrsfs{B}$ for any $p$, in other words, there is no witness of the fact that $\mathrsfs{B}$ is synchronizing.
\begin{figure}[ht]
 \begin{center}
  \unitlength=4pt
    \begin{picture}(20,24)(0,0)
    \gasset{Nw=4,Nh=4,Nmr=2}
\node(A0)(0,20){$0$}
\node(A1)(20,20){$1$}
\node(A2)(20,0){$2$}
\node(A3)(0,0){$3$}
    \drawedge[curvedepth=2](A0,A1){$b$}
    \drawloop[loopangle=-180,loopdiam=5](A0){$a$}
    \drawedge[curvedepth=2](A1,A0){$b$}
    \drawedge(A1,A3){$a$}
    \drawedge[curvedepth=2](A2,A1){$a$}
    \drawedge[curvedepth=-2](A2,A1){$b$}
    \drawedge[curvedepth=2](A3,A2){$a$}
    \drawedge[curvedepth=-2](A3,A2){$b$}
    \end{picture}
 \end{center}
  \caption{An automaton $\mathrsfs{B}$}
  \label{fig:B}
\end{figure}

\section{The eigenvectors and the reset thresholds}

Surprisingly, the knowledge of the eigenvector also allows us to infer an upper bound on the reset threshold. 
In the main result of this section we will transform a given synchronizing automaton into an Eulerian one, and make use of the following theorem by Kari~\cite[Theorem 2]{Kari2003Eulerian}:

\begin{theorem}
\label{th:euler}
The reset threshold of an Eulerian $n$-state automaton is at most $n^2 - 3n + 3$.
\end{theorem}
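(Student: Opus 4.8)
The plan is to prove Kari's bound by exploiting the Eulerian structure twice: once to reformulate synchronization as a growth process on subsets, and once to bound each growth step.

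First I would record the consequence of the Eulerian hypothesis that drives everything. Writing $A_a$ for the $0$-$1$ matrix of the letter $a$ and $\chi_S$ for the characteristic column vector of a subset $S$, determinism gives each $A_a$ row sums $1$, while the Eulerian condition (in-degree $=$ out-degree $=|\Sigma|=k$ at every state) makes the averaged matrix $M=\frac1k\sum_a A_a$ doubly stochastic. Since $A_a\chi_S=\chi_{Sa^{-1}}$, summing over letters yields the identity $\sum_a |Sa^{-1}| = k\,|S|$: preimages preserve the size of a set on average. I would then dualize the notion of synchronization: a word $u$ sends all states to a single state $q$ exactly when $\{q\}u^{-1}=Q$, so it suffices to start from a well-chosen singleton and grow it up to $Q$ by successively taking preimages under letters, reading the corresponding letters in reverse to assemble $u$. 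The whole task becomes bounding the total length of such a growth sequence.

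The first growth step will be made cheap. Because $\mathrsfs A$ is synchronizing, not all letters are permutations, so some letter $a$ merges two distinct states $p_1,p_2$ into a common image $q$; then $\{p_1,p_2\}\subseteq \{q\}a^{-1}$, so a single application of $a$ already carries the singleton $\{q\}$ to a set of size at least $2$. This is the source of the saving that separates $n^2-3n+3=(n-1)(n-2)+1$ from the naive $(n-1)^2$.

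The heart of the argument is an extension lemma: for every $S$ with $2\le |S|\le n-1$ there is a word $w$, $|w|\le n-1$, with $|Sw^{-1}|>|S|$. I would argue from the averaging identity that if no single letter enlarges $S$ then every letter preserves $|S|$, and following preimages we then walk among subsets of the fixed size $s=|S|$. The key point is that if this walk could never leave level $s$, the characteristic vectors of the reachable size-$s$ sets would span a subspace $W$ invariant under every $A_a$, on which each letter would have to permute the finitely many reachable sets and hence act invertibly, contradicting the collapsing behaviour forced by synchronization inside $W$. To extract the length bound I would pass to the hyperplane $\mathbf 1^\perp$, which is invariant under the doubly stochastic $M$ and has dimension $n-1$, and bound the number of steps before the size must increase by the dimension of the span of the reduced vectors $\chi_{S_i}-\frac{s}{n}\mathbf 1$; this caps one extension at $n-1$ letters. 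I expect this linear-algebraic step — coupling invariance of $W$ with the fact that a synchronizing automaton cannot act by permutations on such a space, and squeezing out the exact constant $n-1$ — to be the main obstacle.

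Finally I would add up the costs. After the initial letter the set has size at least $2$, so at most $n-2$ further enlargements are needed to reach $Q$, each costing at most $n-1$ letters by the extension lemma. Concatenating in reverse the letters that realize the preimages produces a single word $u$ with $\{q\}u^{-1}=Q$, i.e. a synchronizing word, of length at most $1+(n-2)(n-1)=n^2-3n+3$. The only delicate point in this last step is checking that the per-step bound of $n-1$ together with the length-$1$ first step is exactly what the counting needs; any slack in the extension lemma would weaken the constant, which is why the sharp form of that lemma is where I would concentrate the effort.
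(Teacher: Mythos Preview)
The paper does not give its own proof of this theorem; it merely quotes it as Kari's result \cite[Theorem~2]{Kari2003Eulerian} and then uses it as a black box inside the proof of Theorem~\ref{th:eigenrt}. So there is no ``paper's proof'' to compare against---you are in effect reconstructing Kari's original argument.

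Your outline is essentially Kari's proof and the arithmetic $1+(n-2)(n-1)=n^2-3n+3$ is correct. The one place where your sketch is imprecise is the contradiction inside the extension lemma. You say that on the stable subspace $W$ ``each letter would have to permute the finitely many reachable sets and hence act invertibly.'' That is not what happens: distinct size-$s$ sets can have the same preimage under a letter, so the $A_a$ need not act as permutations on the reachable sets, and no invertibility is available. The actual contradiction is obtained by applying a \emph{synchronizing} word $u$ (which exists by hypothesis, possibly long) to a nonzero vector $\chi_S-\tfrac{s}{n}\mathbf 1\in W'$: since $A_u=\mathbf 1\,e_q^{T}$ for the sync state $q$, one gets $A_u\bigl(\chi_S-\tfrac{s}{n}\mathbf 1\bigr)=(1-\tfrac{s}{n})\mathbf 1$ when $q\in S$, a nonzero multiple of $\mathbf 1$ lying outside the sum-zero hyperplane, contradicting $A_u W'\subseteq W'\subseteq\mathbf 1^{\perp}$. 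Coupled with the strict growth of the chain $W'_0\subsetneq W'_1\subsetneq\cdots$ inside the $(n-1)$-dimensional hyperplane, this yields the bound $|w|\le n-1$ for an enlarging word, exactly as you need. Replace the ``permute/invertible'' sentence with this synchronizing-word argument and your proof goes through.
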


\begin{theorem}
\label{th:eigenrt}
Let $\mathrsfs{A}$ be a synchronizing automaton and $\vec{w}$ be the eigenvector in accordance with a distribution $p_1,p_2, \ldots, p_k$. The reset threshold of $\mathrsfs{A}$ is at most $w^2 - 3w + 3$, where $w$ is the sum of the entries of $\vec{w}$.
\end{theorem}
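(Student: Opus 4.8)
The plan is to transform $\mathrsfs{A}$ into a \emph{synchronizing Eulerian} automaton $\mathrsfs{B}$ on exactly $w$ states that still has $\mathrsfs{A}$ as a quotient, and then apply Theorem~\ref{th:euler}. A reset word of $\mathrsfs{B}$, read in the quotient, is a reset word of $\mathrsfs{A}$ of the same length, so $\rt(\mathrsfs{A})\le\rt(\mathrsfs{B})\le w^2-3w+3$. We may assume $\mathrsfs{A}$ is strongly connected (this is implicit in the definition of its eigenvector) and that $w\ge 2$.

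The first step is a reduction to the uniform distribution. Writing $p_\ell=q_\ell/N$ with positive integers $q_\ell$ and $N=\sum_\ell q_\ell$, replace each letter $a_\ell$ by $q_\ell$ parallel copies of itself. This changes neither the reset threshold (a reset word over the enlarged alphabet projects to one over $\Sigma$ of the same length, and conversely) nor the eigenvector, since the new matrix for the uniform distribution is $\tfrac1N\sum_\ell q_\ell A_\ell=\sum_\ell p_\ell A_\ell=A$; its integer eigenvector is again $\vec w$, of total weight $w$. Call the new automaton $\mathrsfs{A}'$ and its alphabet size $N$; if $N=1$ then $\mathrsfs{A}'$ is a single cycle, synchronizing only when $w=1$, so we may take $N\ge 2$.

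Now blow up each state $i$ into $\wg(i)$ copies, forming a fiber $F_i$, so the state set of $\mathrsfs{B}$ has size $\sum_i\wg(i)=w$. For every state $i$ and letter $a$ we must choose a function $\phi_{i,a}\colon F_i\to F_{ia}$; whatever the choice, the fiber partition is a congruence whose quotient is $\mathrsfs{A}'$. Two properties must be arranged. \textbf{(i) Eulerian.} Each copy of a state $j$ must receive exactly $N$ incoming edges. The number of edges entering $F_j$ is $\sum_a\wg(j\,a^{-1})$, which equals $N\wg(j)$ by the weight identity $\sum_a\wg(j\,a^{-1})=N\wg(j)$ established (for arbitrary distributions) in the proof of Theorem~\ref{th:partition}. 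Thus, for each $j$, total supply equals total demand, and distributing the incoming edges of $F_j$ among its copies is an integral transportation problem on a complete bipartite graph, hence always feasible; the only local constraint on a single $\phi_{i,a}$ is that at most $N$ copies of $i$ go to one copy of $ia$. \textbf{(ii) Synchronizing.} Fix a reset word $v$ of $\mathrsfs{A}'$ with $Qv=\{p\}$; since the quotient map is a homomorphism, $v$ sends the entire state set of $\mathrsfs{B}$ into $F_p$, so it suffices to make $F_p$ a synchronizing subset. Pick a simple cycle $p=r_0\xrightarrow{a_1}r_1\xrightarrow{a_2}\cdots\xrightarrow{a_\ell}r_\ell=p$ (it exists by strong connectivity, possibly a single loop); it uses $\ell$ \emph{distinct} (state, letter) pairs, and on each we set $\phi_{r_{t-1},a_t}(r_{t-1},u)=(r_t,\lceil u/N\rceil)$. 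This map is at most $N$-to-$1$ and lands in $F_{r_t}$ because $\wg(r_{t-1})\le\wg(r_t\,a_t^{-1})\le N\wg(r_t)$, so by (i) it extends to an Eulerian $\mathrsfs{B}$. Traversing the cycle once then induces the self-map $(p,u)\mapsto(p,\lceil u/N^\ell\rceil)$ of $F_p$, so traversing it $j$ times with $N^{j\ell}\ge w$ collapses $F_p$ to a single copy; hence $\mathrsfs{B}$ is synchronizing.

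Putting this together, $\mathrsfs{B}$ is a synchronizing Eulerian automaton on $w$ states, so $\rt(\mathrsfs{B})\le w^2-3w+3$ by Theorem~\ref{th:euler}, and $\rt(\mathrsfs{A})=\rt(\mathrsfs{A}')\le\rt(\mathrsfs{B})$ via the quotient map, which gives the claim. The main obstacle is reconciling (i) and (ii): one must collapse the fiber over a reset state without destroying the degree balance. This works because the only obstruction to completing a partially specified transition table to an Eulerian one is a local over-capacity (more than $N$ copies merged at once), and the ``$\lceil\,\cdot\,/N\rceil$-descent'' coloring along the cycle is chosen precisely to keep every merge within capacity, the inequalities $\wg(r_{t-1})\le N\wg(r_t)$ being exactly what the eigenvector identity supplies.
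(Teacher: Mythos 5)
Your proof is correct and follows essentially the same route as the paper's: reduce to the uniform distribution by duplicating each letter $a_i$ into $m_i$ copies, blow each state $i$ up into a fiber of $\wg(i)$ copies so that the identity $\sum_a \wg(j a^{-1}) = N\wg(j)$ makes the $w$-state blow-up Eulerian, and then invoke Theorem~\ref{th:euler} together with the quotient map to pull the bound back to $\mathrsfs{A}$. The single point of divergence is how the blow-up is made synchronizing: the paper enlarges the alphabet with fiber-collapsing letters $\Lambda$ (which preserve Eulerianness) and then deletes them from a reset word, whereas you keep the alphabet fixed and force synchronization with the $\lceil\cdot/N\rceil$-contracting coloring along a simple cycle through the reset state; your capacity check is the one delicate step and it does go through, since the cycle being simple means each fiber receives at most one prescribed map and hence at most $N$ prescribed edges per copy.
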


\begin{proof}
Let $\Sigma = \{a_1, a_2, \ldots, a_k \}$ and $p_i = \frac{m_i}{\ell}$ for $1\leq i \leq k$, where $m_i, \ell$ are positive integers. First, we will duplicate some edges in order to reduce our problem to automata with the eigenvectors obtained in accordance with the uniform distribution.  We can construct a new automaton $\mathrsfs{A}'$ on the set of states of $\mathrsfs{A}$ with the alphabet $\Sigma' = \{ a_1^1, a_1^2, \ldots, a_1^{m_1}, a_2^1, a_2^2, \ldots, a_2^{m_2}, \ldots , a_k^1, a_k^2, \ldots, a_k^{m_k} \}$. For arbitrary $i$ and $j$, the action of the letter $a_i^j$ in $\mathrsfs{A}'$ coincides with the action of the letter $a_i$ in $\mathrsfs{A}$. It is easy to see that $\mathrsfs{A}'$ is synchronizing and $\rt(\mathrsfs{A}') = \rt(\mathrsfs{A})$. Furthermore, the eigenvector of $\mathrsfs{A}'$ in accordance with a uniform distribution coincides with $\vec{w}$.

Now we are going to construct an Eulerian automaton $\mathrsfs{A}''$ on a larger set of states with the reset threshold at least $\rt(\mathrsfs{A}')$. We will also slightly enlarge the alphabet of $\mathrsfs{A}'$. Let $Q = \{0,1, \ldots, n-1\}$ be the set of states of $\mathrsfs{A}'$ and $\vec{w}[i]$ be the $i$th entry of $\vec{w}$. The set of states of $\mathrsfs{A}''$ is equal to $\{(i,j)\, |\, i \in Q, 1 \leq j \leq \vec{w}[i] \}$. Let $S_i$ be the set $\{ (i,j) \, |\, 1 \leq j \leq \vec{w}[i]\}$. We define the action of every letter $x \in \Sigma'$ in such a way that $S_i x \subseteq S_{ix}$, where $ix$ is an image of $i$ in $\mathrsfs{A}'$. Furthermore, we will require that $\mathrsfs{A}''$ is an Eulerian automaton, i.e. the in-degree of every state is equal to the out-degree. Let us show that it is always possible to achieve these goals. Let $k' = |\Sigma'|$ and $c_{ij}$ be the number of letters in $\Sigma'$ that bring $i$ to $j$ in $\mathrsfs{A}'$. By the definition of $\vec{w}$ we have $\sum_{i \in Q} c_{ij} \vec{w}[i] = k' \vec{w}[j]$ for every $j$. In other words, the number of incoming edges to $S_j$ is equal to $k' |S_j|$. Thus, we can redistribute the incoming edges to $S_j$ in such a way that every state has in-degree equal to $k'$. Since the out-degree of $\mathrsfs{A}'$ is equal to $k'$ we get that $\mathrsfs{A}'$ is Eulerian.
We will also add an extra set of letters $\Lambda$ to the automaton $\mathrsfs{A}''$ in the following way. For every $i \in Q$ and $j \in S_i$ we add a letter that brings all the states from $S_i$ to the state $j$, while all the other states are fixed. Note, that the automaton $\mathrsfs{A}''$ over the enlarged alphabet $\Sigma' \cup \Lambda$ is still Eulerian.

Now we are ready to show that $\rt (\mathrsfs{A}') = \rt (\mathrsfs{A})$ is at most $\rt (\mathrsfs{A}'')$. Let $u$ over the alphabet $\Sigma' \cup \Lambda$ be a synchronizing word of $\mathrsfs{A}''$ that brings it to a state in $S_i$ for some $i$. Let $v$ be a word over the alphabet $\Sigma'$ obtained from $u$ by removing all the letters from $\Lambda$. Since the action of every letter $x$ from $\Sigma'$ of the automaton $\mathrsfs{A}''$ satisfies $S_i x \subseteq S_{ix}$, where $ix$ is the image of $i$ in the automaton $\mathrsfs{A}'$, we conclude that $u$ is a synchronizing word for the automaton $\mathrsfs{A}'$ and $\rt (\mathrsfs{A}) \leq \rt (\mathrsfs{A}'')$.

Note, that the automaton $\mathrsfs{A}''$ has $w$ states. Therefore, by theorem~\ref{th:euler} we conclude that the reset threshold of $\mathrsfs{A}''$ is at most $w^2 -3w +3$. \qed
\end{proof}
\begin{remark}
The eigenvector of a digraph $G$ is the same for every coloring, thus the bound presented in theorem~\ref{th:eigenrt} holds true for every coloring.
\end{remark}

The theorem~\ref{th:eigenrt} gives a simple combinatorial proof of the quadratic upper bound for the reset threshold in the class of pseudo-eulerian automata~\cite{Berlinkov2013QuasiEulerianOneCluster,Steinberg2011AveragingTrick}.
Generally speaking, we can formulate the following corollary:
\begin{corollary}
Let $\mathcal{C}$ be a class of synchronizing automata. If the sum of the entries of the eigenvector of each $n$-state automaton $\mathrsfs{C} \in \mathcal{C}$ in accordance with some distribution is bounded by $f(n)$, then the reset threshold is bounded by $f^2(n) -3 f(n) + 3$.
\end{corollary}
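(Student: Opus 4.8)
The plan is to deduce this immediately from Theorem~\ref{th:eigenrt}. Fix an $n$-state automaton $\mathrsfs{C} \in \mathcal{C}$ together with a distribution $p_1, \ldots, p_k$ witnessing the hypothesis, so that the eigenvector $\vec{w}$ of $\mathrsfs{C}$ in accordance with this distribution has coordinate sum $w \le f(n)$. Theorem~\ref{th:eigenrt} then yields $\rt(\mathrsfs{C}) \le w^2 - 3w + 3$, and it only remains to replace $w$ by the possibly larger quantity $f(n)$ on the right-hand side.

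For this I would invoke the monotonicity of $g(x) = x^2 - 3x + 3$. This parabola attains its minimum at $x = 3/2$, so $g$ is nondecreasing on $[3/2, \infty)$. Since $\vec{w}$ has positive integer entries, one for each state, its coordinate sum satisfies $w \ge n$, hence $w \ge 2 \ge 3/2$ whenever $n \ge 2$. Thus $3/2 \le w \le f(n)$, and monotonicity gives $g(w) \le g(f(n)) = f^2(n) - 3f(n) + 3$; combined with the bound from Theorem~\ref{th:eigenrt} this proves the claim for $n \ge 2$. The case $n = 1$ is trivial, as a one-state automaton is synchronized by the empty word while $g(f(1)) > 0$.

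The only point requiring care is this last step — guaranteeing that enlarging $w$ up to $f(n)$ does not decrease the value of $g$ — which reduces to the elementary observation that the coordinate sum of an integer eigenvector already lies to the right of the vertex of the parabola. Apart from that, the corollary is nothing more than a uniform application of Theorem~\ref{th:eigenrt} over the class $\mathcal{C}$, so I do not expect any genuine obstacle.
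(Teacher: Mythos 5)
Your proof is correct and follows exactly the route the paper intends: the corollary is stated as an immediate consequence of Theorem~\ref{th:eigenrt}, with no separate proof given. The monotonicity check for $x^2-3x+3$ on $[3/2,\infty)$, justified by $w\ge n$ since $\vec{w}$ is a positive integer vector, is precisely the routine detail the paper leaves implicit, and you handle it correctly.
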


In general, the upper bound in the theorem~\ref{th:eigenrt} can be exponential in terms of the number of states. Consider the following $k$-out-regular digraph $U_{n,k}$. The set of vertices is equal to $\{0,1,2,\ldots , n-1\}$. For each $0\leq i \leq n-1$ there is an edge $(i,0)$ of multiplicity $k-1$ and an edge $(i,i+1 \mod n)$. It is easy to verify that the integer eigenvector of $U_{n,k}$ is  $(k^{n-1}, k^{n-2}, \ldots, k, 1)$. Thus, the sum of the entries is $\frac{k^n - 1}{k-1}$. 

\begin{proposition}
Given a $k$-out-regular digraph $G$ with $n$ vertices. The entries of the eigenvector $\vec{w}$ are upper-bounded by $(2k^2)^{\frac{n-1}{2}}$.
\end{proposition}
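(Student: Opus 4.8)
The plan is to express the entries of $\vec{w}$ as $(n-1)\times(n-1)$ minors of a Laplacian-type matrix and then bound those minors by Hadamard's inequality. Throughout I assume, as is implicit in the very definition of $\vec{w}$, that $G$ is strongly connected, and I write $n$ for the number of vertices. Let $M$ be the adjacency matrix of $G$, so that $M[i,j]$ equals the multiplicity of the arc $(i,j)$. Then $M$ is a non-negative integer matrix whose every row sums to $k$, and $\tfrac{1}{k}M$ is precisely the row-stochastic matrix used to define $\vec{w}$ for the uniform distribution; hence $\vec{w}\,M = k\,\vec{w}$. Put $L := kI - M$. Then $L$ has integer entries, non-positive off-diagonal entries, non-negative diagonal, all row sums equal to $0$, and $\vec{w}\,L = \vec{0}$. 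Since $G$ is strongly connected, $k$ is a simple eigenvalue of $M$, so $L$ has rank $n-1$ and its left null space is exactly the line spanned by $\vec{w}$.

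Next I would pass to the adjugate $\operatorname{adj}(L)$. From $\operatorname{adj}(L)\,L = \det(L)\,I = 0$ we see that every row of $\operatorname{adj}(L)$ is a left null vector of $L$, hence a scalar multiple of $\vec{w}$; and since $\operatorname{rank}\operatorname{adj}(L)=1$, at least one row $\vec{r}$ of $\operatorname{adj}(L)$ is non-zero, say $\vec{r}=\beta\,\vec{w}$. Each entry of $\vec{r}$ is, up to sign, an $(n-1)\times(n-1)$ minor of $L$, so $\vec{r}$ is a non-zero integer vector; as the entries of $\vec{w}$ are integers with greatest common divisor $1$, the scalar $\beta$ must be a non-zero integer. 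Consequently $0 < w_j \le |\vec{r}[j]|$ for every $j$, that is, every coordinate of $\vec{w}$ is bounded by the absolute value of some $(n-1)\times(n-1)$ minor of $L$.

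It then remains to bound an arbitrary $(n-1)\times(n-1)$ submatrix $B$ of $L$. Hadamard's inequality gives $|\det B|\le \prod_{i=1}^{n-1}\|B_i\|_2$, where $B_1,\dots,B_{n-1}$ are the rows of $B$ and $\|\cdot\|_2$ is the Euclidean norm. The decisive point is that the rows of $L$ are short in the $\ell_1$ sense: the $i$-th row of $L$ has $\ell_1$-norm $\sum_j |L[i,j]| = (k-M[i,i]) + \sum_{j\ne i} M[i,j] = 2\,(k-M[i,i]) \le 2k$, while every entry of $L$ is at most $k$ in absolute value. Deleting one row and one column only decreases these quantities, so every row $B_i$ satisfies $\|B_i\|_2^2 \le \|B_i\|_1\,\|B_i\|_\infty \le 2k^2$. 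Hence $|\det B| \le (2k^2)^{(n-1)/2}$, and combining this with the previous step yields $w_j \le (2k^2)^{(n-1)/2}$ for all $j$, as required.

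I do not foresee a genuine obstacle. The two steps deserving care are: (i) the reduction ``a coordinate of $\vec{w}$ is at most an $(n-1)$-minor of $L$'', which relies on strong connectedness forcing $\operatorname{rank}L = n-1$ (so that the null line is genuinely one-dimensional and $\operatorname{adj}(L)\neq 0$) together with the normalization of $\vec{w}$ to coprime integers; and (ii) obtaining the Hadamard estimate in the sharp form $\|B_i\|_2 \le \sqrt{2k^2}$ rather than the crude $\sqrt{n-1}\,k$ — it is precisely here, via $\|B_i\|_1 \le 2k$, that the out-regularity hypothesis (every row sum of $M$ equal to $k$) enters, and it is exactly this that produces the clean constant $2k^2$ under the square root.
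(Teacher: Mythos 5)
Your proof is correct, and it follows the same two-stage outline as the paper (reduce a coordinate of $\vec{w}$ to an $(n-1)\times(n-1)$ minor of $kI-A$, then apply Hadamard with the row bound $\|B_i\|_2^2\le 2k^2$), but the reduction step is justified by a genuinely different and more self-contained argument. The paper invokes a theorem of Borosh et al.\ (cited as~\cite{Borosh1977}): for an integer matrix of rank $r$, any solvable homogeneous system with a nontrivial non-negative integer solution has one whose entries are bounded by the maximal absolute value of an $r\times r$ minor. This produces \emph{some} bounded non-negative eigenvector $\vec{w}'$, and the paper then appeals to Perron--Frobenius to see that $\vec{w}'$ is positive (leaving implicit the final comparison $\vec{w}\le\vec{w}'$, which needs $\vec{w}$ to be the primitive integer vector on the eigenline). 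You instead exploit the fact that here the rank is exactly $n-1$ and the left null space is a single line: a non-zero row of $\operatorname{adj}(kI-A)$ is an integer multiple $\beta\vec{w}$ with $|\beta|\ge 1$ (using $\gcd$ of the entries of $\vec{w}$ equal to $1$), and its entries are literally signed $(n-1)$-minors. This avoids the external citation, directly bounds $\vec{w}$ itself rather than a proxy, and makes the normalization issue explicit; the price is that it only works because strong connectedness forces the null space to be one-dimensional, whereas Borosh's theorem applies to arbitrary rank. Your Hadamard step, via $\|x\|_2^2\le\|x\|_1\|x\|_\infty\le 2k\cdot k$, is the same estimate the paper uses (it bounds each row norm by $\sqrt{2k^2}$ by noting the off-diagonal entries of a row sum to $k-A[i,i]$). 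Do state explicitly that you take $\vec{w}$ with coprime entries, since the paper's definition via the least common multiple of denominators only yields this under the natural normalization of $\vec{u}$ as a probability vector.
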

\begin{proof}
The eigenvector $\vec{w}$ satisfies the equation: $\vec{w}(A - k I) = 0$, where $A$ is the adjacency matrix of $G$ and $I$ is the identity matrix. Thus, the rank of $A -kI$ is at most $n-1$.
The main result of~\cite{Borosh1977} states that for every integer matrix $M$ of rank $r$ if a system $Mx=0$ admits a nontrivial non-negative integer solution, then there exists such a  solution with entries bounded by the maximum of the absolute values of the $r \times r$ minors of $M$.

Thus, we conclude that there exists a non-negative integer vector $\vec{w'}$ such that $\vec{w}(A - k I) = 0$ and entries of $\vec{w}'$ are bounded by the maximum of the absolute values of the $(n-1)\times (n-1)$ minors of $A-kI$. Note, that the norm of each row of every minor is at most $\sqrt{2k^2}$. Thus, by Hadamard's inequality for the determinant we obtain an upper bound $(2k^2)^{\frac{n-1}{2}}$ on the minors. Since the non-negative vector $\vec{w'}$ is an eigenvector of $A$ associated with the largest eigenvalue, we immediately conclude that $\vec{w'}$ is positive by the Perron-Frobenius theorem.
\end{proof}

\section{The eigenvectors of totally synchronizing digraphs}

Let $\vec{w}$ be a positive integer vector. In this section we will call an equivalence relation $\beta$ on the coordinates of $\vec{w}$ \emph{a partition} if it satisfies the following property: there exists a constant $b$ such that for every class $B$ of $\beta$ we have $\sum_{i \in B} \vec{w}[i]=b$. We will refer to the classes of a partition $\beta$ as \emph{blocks}. If $\vec{w}$ is the eigenvector of an automaton $\mathrsfs{A}$, then every coordinate correspond to the state of $\mathrsfs{A}$. Thus, we can naturally obtain an equivalence relation $\beta'$ on the states of $\mathrsfs{A}$ from the partition $\beta$. Abusing notation, we will refer to $\beta'$ as $\beta$. A vector $\vec{w}$ is called \emph{partitionable} if it possesses a partition.
Let $\mathcal{G}(\vec{w})$ be a class of primitive digraphs with the eigenvector $\vec{w}$ such that every digraph in the class has a fixed out-degree (which can be different for two different digraphs from the class).
\begin{theorem}
\label{pr:blocks}
A vector $\vec{w}$ is not partitionable if and only if all digraphs from $\mathcal{G}(\vec{w})$ are totally synchronizing.
\end{theorem}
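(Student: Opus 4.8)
The plan is to prove the two implications separately. The direction ``$\vec{w}$ not partitionable $\Rightarrow$ all digraphs of $\mathcal{G}(\vec{w})$ are totally synchronizing'' is immediate: given $G \in \mathcal{G}(\vec{w})$ and any coloring $\mathrsfs{A}$ of $G$, the automaton $\mathrsfs{A}$ is strongly connected (since $G$ is primitive) and its eigenvector in accordance with the uniform distribution is $\vec{w}$; as $\vec{w}$ has no partition into blocks of equal weight, the corollary following Theorem~\ref{th:partition} gives that $\mathrsfs{A}$ is synchronizing. Since $\mathrsfs{A}$ and $G$ were arbitrary, this direction follows.

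For the converse I would prove the contrapositive: assuming $\vec{w}$ is partitionable, I exhibit one digraph in $\mathcal{G}(\vec{w})$ carrying a non-synchronizing coloring. Fix a partition $\beta$ of $\vec{w}$ into blocks $B_1, \dots, B_m$ of common weight $b = \sum_{u \in B_i}\vec{w}[u]$ with $m \ge 2$. Build a digraph $G$ as a weighted blow-up of the $m$-vertex digraph obtained from the directed cycle $1 \to 2 \to \cdots \to m \to 1$ by adding a loop at every vertex: replace vertex $i$ by the block $B_i$, and for every $v \in B_i$ put $\vec{w}[u]$ parallel edges $v\to u$ for each $u \in B_i$ (the loop part) and $\vec{w}[u]$ parallel edges $v \to u$ for each $u \in B_{i+1 \bmod m}$ (the cycle part). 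Every vertex then has out-degree exactly $2b$, so $G$ is $2b$-out-regular. A one-line computation of $\vec{w}A_G$ using $\sum_{u \in B_i}\vec{w}[u] = b$ shows $\vec{w}A_G = 2b\,\vec{w}$, so $\vec{w}$ is a positive left eigenvector of $A_G$ for the Perron value $2b$; since the entries of $\vec{w}$ are coprime it is exactly the eigenvector of $G$. Moreover $G$ is primitive: it is strongly connected (from any $v \in B_i$ one reaches every vertex of $B_{i+1}$ in one cycle-part step, hence every vertex of $G$), and it has a cycle of length $1$ (a loop) and one of length $m$ (once around the blocks), so its period is $\gcd(1,m) = 1$. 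Hence $G \in \mathcal{G}(\vec{w})$.

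It remains to colour $G$ so that no word synchronizes it. Take a label set $\{a_1,\dots,a_b,c_1,\dots,c_b\}$ of size $2b$ and, at each vertex $v$, assign $a_1,\dots,a_b$ bijectively to the $b$ loop-part out-edges of $v$ and $c_1,\dots,c_b$ bijectively to the $b$ cycle-part out-edges; this is a legitimate coloring. By construction every $a_t$ sends $B_i$ into $B_i$ and every $c_t$ sends $B_i$ into $B_{i+1 \bmod m}$, so a word containing $r$ letters from $\{c_1,\dots,c_b\}$ maps each $B_i$ into $B_{i+r \bmod m}$. Consequently the image of $Q$ under any word meets every block and has at least $m \ge 2$ states, so this coloring is not synchronizing and $G$ is not totally synchronizing, which completes the proof.

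The main obstacle is the primitivity constraint built into the definition of $\mathcal{G}(\vec{w})$. The most natural realization of a partition — a blow-up of a bare $m$-cycle — has period $m$ and is therefore excluded; inserting the loops (equivalently, taking the blow-up over an aperiodic base digraph) repairs primitivity while still leaving a coloring under which the blocks form an invariant block system. A secondary point that needs a word of justification is that the vector $\vec{w}$ produced by the construction really is the minimal integer eigenvector of $G$ rather than a proper integer multiple of it, which is where coprimality of the entries of $\vec{w}$ enters.
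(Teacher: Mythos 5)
Your proof is correct, and while the first implication coincides with the paper's (both reduce it to Theorem~\ref{th:partition} via its corollary), your argument for the converse uses a genuinely different construction. The paper realizes a partitionable $\vec{w}$ by the ``complete'' digraph in which every pair $(i,j)$ carries an edge of multiplicity $\vec{w}[j]$, and then obtains a non-synchronizing coloring indirectly: it restricts to colorings for which the block partition is a congruence, observes that the factor automaton is Eulerian, invokes Lemma~1 of~\cite{Kari2003Eulerian} (every Eulerian digraph has a non-synchronizing coloring) to recolor the factor, and lifts the recoloring back. Your blow-up of the $m$-cycle with loops instead hard-wires the bad coloring into the digraph: the quotient automaton on the blocks is a permutation automaton (the $a_t$ act as the identity, the $c_t$ as the cyclic shift), so non-synchronizability is immediate and no external lemma is needed. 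What the paper's route buys is a single universal witness digraph of out-degree $w=b\ell$ that works for \emph{every} partition of $\vec{w}$ simultaneously; what yours buys is a self-contained, more elementary argument, plus an explicit handling of two points the paper leaves implicit, namely that a partition must have $m\ge 2$ blocks for the construction to be nontrivial, and that $\vec{w}$ (with coprime entries) really is \emph{the} integer eigenvector of the constructed digraph rather than a multiple of it. Both constructions correctly yield a primitive member of $\mathcal{G}(\vec{w})$, so the two proofs are interchangeable.
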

\begin{proof}
Let $G$ be a digraph from $\mathcal{G}(\vec{w})$. If $G$ has a non-synchronizing coloring, then by theorem~\ref{th:partition} it admits a partition of the states into synchronizing subsets of equal weight. Thus, the vector $\vec{w}$ is also partitionable.

Assume now that $\vec{w}$ is partitionable, i.e. there are sets $B_1, B_2, \ldots, B_\ell$ such that for every $i$ we have $\sum_{j \in B_i} \vec{w}[j] = b$. We will construct a primitive digraph $G$ on the set of vertices $V=\{0,1, \ldots, n-1\}$, where $n$ is the number of entries in $\vec{w}$, that is not totally synchronizing. The set of edges is defined as follows: for every $i$ and $j$ there is an edge $(i,j)$ of multiplicity $\vec{w}[j]$. Note, that the out-degree of every vertex is equal to the total weight of $\vec{w}$, i.e. $b\ell$. Furthermore, the digraph $G$ is primitive. We have $\sum_{i \in V} c_{ij} \vec{w}[i]=\sum_{i \in V} \vec{w}[j] \vec{w}[i]=\vec{w}[j] \sum_{i \in V} \vec{w}[i] = b \ell \vec{w}[j]$, where $c_{ij}$ is the multiplicity of the edge from $i$ to $j$. Therefore, $\vec{w}$ is the eigenvector of $G$.

Now we are going to construct a non-synchronizing coloring of $G$. Let $\mathcal{A}$ be a set of colorings of $G$
that satisfy the following property: for every letter $x$ and for every $i,j \in V$ such that $i,j \in B_s$ for some $s$ if $ix \in B_t$, then $j x \in B_t$.
For every automaton $\mathrsfs{A} \in \mathcal{A}$ the partition of the states $\beta = \{B_1, B_2, \ldots, B_\ell \}$ is a congruence. Let's fix some automaton $\mathrsfs{A} \in \mathcal{A}$. The factor automaton $\mathrsfs{A'}$ of $\mathrsfs{A}$ with respect to $\beta$ is Eulerian. Lemma~1 from~\cite{Kari2003Eulerian} states that every Eulerian automaton has a non-synchronizing coloring\footnote{It is also a simple consequence of the Birkhoff-von Neumann theorem}. Thus, we can recolor an automaton $\mathrsfs{A'}$ into a non-synchronizing automaton $\mathrsfs{A''}$. It is not hard to see, that $\mathrsfs{A''}$ is a factor automaton of some automaton $\mathrsfs{B} \in \mathcal{A}$. Furthermore, the automaton $\mathrsfs{B}$ is not synchronizing, since its factor automaton is not synchronizing. \qed
\end{proof}

Theorem~\ref{pr:blocks} allows us to obtain very simple proofs for otherwise non-obvious statements. Recall that the \v{C}ern\'{y} automaton $\mathrsfs{C}_n$ \cite{Cerny1964} can be defined as $\langle \{0,\ldots,n-1\},\{a,b\},\delta \rangle$, where $\delta(i,a)=i+1$ for $i<n-1$, $\delta(n-1,a)=0$, $\delta(n-1,b)=0$, and $\delta(i,b)=i$ for $i<n-1$.
\begin{proposition}{\cite[Proposition 2]{GuSz2015}}
The underlying digraph of the \v{C}ern\'{y} automaton $\mathrsfs{C}_n$ is totally synchronizing.
\end{proposition}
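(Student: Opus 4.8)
The plan is to apply Theorem~\ref{pr:blocks}: it suffices to determine the eigenvector $\vec{w}$ of the underlying digraph $G$ of $\mathrsfs{C}_n$ (with respect to the uniform distribution) and to check that $\vec{w}$ is not partitionable.

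First I would describe $G$ explicitly. From the definition of $\delta$, every vertex $i$ with $0\le i\le n-2$ carries a loop (the edge produced by $b$) together with the edge $(i,i+1)$ (produced by $a$), while the vertex $n-1$ has a double edge to $0$, since both $a$ and $b$ map $n-1$ to $0$. I would then solve $\vec{w}(A_a+A_b)=2\vec{w}$ coordinate by coordinate, where $A_a,A_b$ are the adjacency matrices of the two letters. For $1\le j\le n-2$ the only edges entering $j$ are the loop at $j$ and the edge $(j-1,j)$, so the $j$th equation reads $\vec{w}[j]+\vec{w}[j-1]=2\vec{w}[j]$, that is $\vec{w}[j-1]=\vec{w}[j]$; hence $\vec{w}[0]=\dots=\vec{w}[n-2]$. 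The equation for coordinate $n-1$ (whose only incoming edge is $(n-2,n-1)$) gives $\vec{w}[n-2]=2\vec{w}[n-1]$, and the equation for coordinate $0$ (incoming: the loop at $0$ and the double edge from $n-1$) is consistent with this. Normalising $\vec{w}[n-1]=1$, we obtain $\vec{w}=(2,2,\dots,2,1)$ with $n-1$ coordinates equal to $2$; its total weight is $2n-1$.

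Next I would show that $\vec{w}$ is not partitionable by a parity argument. Exactly one coordinate of $\vec{w}$ is odd, namely the last one. Therefore, in any partition of $\vec{w}$ into blocks, the block containing coordinate $n-1$ has odd sum while every other block has even sum. If such a partition had two or more blocks, two of these block sums would be forced to coincide, which is impossible because one is odd and another is even. Hence $\vec{w}$ possesses no partition into two or more blocks of equal sum, i.e.\ it is not partitionable. Since $G$ is primitive (it has a loop and is strongly connected) and has fixed out-degree $2$, it lies in $\mathcal{G}(\vec{w})$, so Theorem~\ref{pr:blocks} yields that $G$ — and in fact every primitive digraph with eigenvector $\vec{w}$ — is totally synchronizing.

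The argument involves no real obstacle beyond the routine computation of the eigenvector; the only subtle point is to rely on the convention that the single-block partition does not count as a genuine partition, which is exactly what makes the parity argument conclusive.
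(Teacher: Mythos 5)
Your proof is correct and follows exactly the paper's argument: compute the eigenvector $(2,2,\ldots,2,1)$ of the underlying digraph and observe that the parity obstruction (exactly one block of any partition would have odd sum) rules out a partition into blocks of equal sum, so Theorem~\ref{pr:blocks} applies. Your version just spells out the eigenvector computation and the implicit convention that a partition must have at least two blocks, both of which the paper leaves to the reader.
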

\begin{proof}
It is easy to verify that the eigenvector $\vec{w}$ of the underlying digraph of the $n$-state \v{C}ern\'{y} automaton is equal to $(2,2,\ldots, 2, 1)$. Since in every partition exactly one block will have an odd sum, we conclude that  $\vec{w}$ is not partitionable. Thus, the digraph is totally synchronizing.
\end{proof}
A similar proof can be presented for many other examples in~\cite{AGV2013}. 
We want to remark the following easy to use corollary:
\begin{corollary}
If $G$ is not an Eulerian digraph and its eigenvector is not partitionable, then $G$ is totally synchronizing.
\end{corollary}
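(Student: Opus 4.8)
The plan is to deduce this corollary essentially for free from the material already developed, chiefly from Theorem~\ref{pr:blocks} (or, even more cheaply, from the corollary stated just after Theorem~\ref{th:partition}). The main line of argument I would run is: take an arbitrary coloring $\mathrsfs{A}$ of $G$; recall from the discussion preceding Theorem~\ref{th:friedman} that the eigenvector of $\mathrsfs{A}$ with respect to the uniform distribution is exactly the eigenvector $\vec{w}$ of $G$; since $\vec{w}$ is not partitionable by hypothesis, the corollary following Theorem~\ref{th:partition} forces $\mathrsfs{A}$ to be synchronizing; as $\mathrsfs{A}$ was arbitrary, $G$ is totally synchronizing. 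I expect no real obstacle here — the proof is a one-liner once the bookkeeping ``eigenvector of a coloring $=$ eigenvector of $G$'' is in place.

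The one spot that needs care is the implicit hypothesis of strong connectivity: Theorem~\ref{th:partition} and its corollary are proved under the standing assumption that the automaton, hence $G$, is strongly connected, and indeed for a digraph that is not strongly connected the eigenvector is not well defined in our sense, so there is nothing to prove. If one prefers to cite Theorem~\ref{pr:blocks} literally, then one also has to check $G \in \mathcal{G}(\vec{w})$, i.e.\ that $G$ is primitive; I would get this from non-partitionability itself, by noting that a strongly connected $k$-out-regular digraph of period $d \geq 2$ has its vertices partitioned into cyclic classes $V_0, \dots, V_{d-1}$ (all edges going from $V_i$ to $V_{i+1 \bmod d}$) and that the weight-balance identity $k \sum_{v \in V_{i+1}} \vec{w}[v] = \sum_{u \in V_i} \vec{w}[u] \sum_v c_{uv} = k \sum_{u \in V_i} \vec{w}[u]$ makes all the blocks $V_i$ equal in $\vec{w}$-weight, a partition into $d \geq 2$ blocks contradicting the hypothesis.

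Lastly, I would point out that the assumption ``$G$ is not Eulerian'' is in fact redundant, kept only for its mnemonic value: if $G$ were Eulerian then the all-ones vector would be a left eigenvector of its adjacency matrix (every column sum equals the common out-degree), so $\vec{w} = (1, \dots, 1)$ and the partition into singletons would witness partitionability; conversely, non-Eulerianness is a structurally visible necessary condition for total synchronization (an Eulerian digraph always has a non-synchronizing coloring, by Lemma~1 of~\cite{Kari2003Eulerian}), so it is natural to check it first and only then verify that $\vec{w}$ is not partitionable.
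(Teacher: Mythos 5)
Your proof is correct and matches what the paper intends: the statement is an immediate consequence of Theorem~\ref{pr:blocks} (equivalently, of the corollary to Theorem~\ref{th:partition}), since every coloring of $G$ has eigenvector $\vec{w}$ and a non-synchronizing coloring would force $\vec{w}$ to be partitionable. Your side remarks — that non-partitionability already rules out imprimitivity via the equal-weight cyclic classes, and that the ``not Eulerian'' hypothesis is redundant because an Eulerian $G$ has $\vec{w}=(1,\dots,1)$, which is partitioned by singletons — are accurate and consistent with how the paper uses these facts elsewhere.
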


There are classes of digraphs $\mathcal{G}(\vec{w})$ that contain both totally synchronizing and not totally synchronizing digraphs. Let $\vec{w}$ be $(1,1,2,2)$. The underlying digraph of the automaton $\mathrsfs{D}$, see fig.~\ref{fig:D}, belongs to $\mathcal{G}(\vec{w})$. It is not totally synchronizing, since the pair $\{2,3\}$ is not synchronizable in the coloring $\mathrsfs{D}$. At the same time, it is easy to see that the underlying digraph of the automaton $\mathrsfs{B}$, see fig.~\ref{fig:B}, belongs to $\mathcal{G}(\vec{w})$ and it is totally synchronizing.

There are also classes of digraphs $\mathcal{G}(\vec{w})$ which do not contain totally synchronizing digraphs at all. Namely, if $\vec{w} = (1,1,\ldots,1)$ then every digraph in $\mathcal{G}(\vec{w})$ is Eulerian, thus it possesses a non-synchronizing coloring~\cite[Lemma 1]{Kari2003Eulerian}.
\begin{figure}[ht]
 \begin{center}
  \unitlength=4pt
    \begin{picture}(20,20)(0,0)
    \gasset{Nw=4,Nh=4,Nmr=2}
\node(A0)(0,20){$3$}
\node(A1)(20,20){$0$}
\node(A2)(20,0){$1$}
\node(A3)(0,0){$2$}
    \drawedge[curvedepth=2](A0,A3){$b$}
    \drawloop[loopangle=-180](A0){$a$}
    \drawedge[curvedepth=0](A1,A3){$b$}
    \drawloop[loopangle=0](A1){$a$}
    
    \drawedge[curvedepth=0](A2,A1){$b$}
    \drawedge[curvedepth=2](A2,A3){$a$}
    \drawedge[curvedepth=2](A3,A2){$a$}
    \drawedge[curvedepth=2](A3,A0){$b$}
    \end{picture}
 \end{center}
  \caption{Automaton $\mathrsfs{D}$}
  \label{fig:D}
\end{figure}

Recall the following non-trivial conjecture made in~\cite[Conjecture 3]{GuSz2015}:
\begin{conjecture}
For every $k \geq 2$, the fraction of totally synchronizing digraphs
among all $k$-out-regular digraphs with $n$ vertices tends to 1 as $n$ goes to infinity.
\end{conjecture}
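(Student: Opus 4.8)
The plan is to use Theorem~\ref{pr:blocks} as a sufficient condition: a primitive $k$-out-regular digraph $G$ with eigenvector $\vec{w}$ is totally synchronizing as soon as $\vec{w}$ is not partitionable. So the goal is to show that a uniformly random $k$-out-regular digraph on $n$ vertices has a non-partitionable eigenvector with probability tending to $1$. One first reduces to the model of random \emph{primitive} digraphs: with high probability $G$ has a unique terminal strongly connected component $T$ (small subsets cannot be sinks, by a first-moment bound, and there is a single giant sink), $T$ is itself $k$-out-regular and primitive, and $G$ is totally synchronizing whenever $T$ is; hence it suffices to control the eigenvector of $T$. In either formulation the statement to be proved no longer mentions colorings.

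Next one makes the eigenvector explicit through the Markov chain tree theorem. For the uniform random walk on a digraph every spanning in-arborescence rooted at a vertex $v$ carries the same weight $k^{-(n-1)}$, so the stationary distribution satisfies $\pi(v)\propto N_v$, where $N_v$ is the number of spanning in-arborescences rooted at $v$. Consequently the integer eigenvector is $\vec{w}[v]=N_v/\gcd_u N_u$, and $\vec{w}$ is partitionable precisely when the vertex set admits a partition into $\ell\ge 2$ blocks $B_1,\ldots,B_\ell$ with $\sum_{v\in B_1}N_v=\cdots=\sum_{v\in B_\ell}N_v$. Standard estimates on the number of spanning trees of sparse random (di)graphs then give that with high probability every $N_v$ is exponentially large, in fact $N_v=\rho^{(1+o(1))n}$ with $\rho=\rho(k)>2$ for every $k\ge 2$ (the underlying undirected multigraph has mean degree $2k$). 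In other words we land in the ``hard'' regime of random number partitioning, where a generic instance admits no balanced partition.

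The core of the argument is to rule out \emph{all} partitions, and here the union bound has to be handled carefully: there are $2^{\Theta(n\log n)}$ set partitions, far too many against a per-partition failure bound of only $2^{-O(n)}$. The remedy is to organise the union bound by the number of blocks $\ell$. For $\ell=2$ one unions over the $2^n$ subsets $S$, and $\sum_{v\in S}N_v=\sum_{v\notin S}N_v$ holds with probability about $\rho^{-n}$ by a local limit estimate, for a total of $(2/\rho)^n=o(1)$ --- this is exactly the Borgs--Chayes--Pittel threshold phenomenon. For $\ell\ge 3$ one cannot reduce to $\ell=2$ by merging blocks; instead one bounds the number of ordered $\ell$-partitions by $\ell^{\,n}$ and the probability that all of the $\ell$ block-sums coincide by roughly $\rho^{-(\ell-1)n}$ --- a joint anti-concentration estimate, the $\ell$ block-sums being supported on disjoint sets among the variables $N_v$ --- so that the total contribution $\sum_{\ell\ge 2}\ell^{\,n}\rho^{-(\ell-1)n}$ is dominated by its $\ell=2$ term and tends to $0$ since $\rho>2$.

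The step I expect to be the real obstacle --- and the reason the conjecture is still open --- is exactly the anti-concentration input just used. One needs, for each fixed $\ell$, a multidimensional local limit theorem for the linear forms $\sum_{v\in B_s}N_v$ at \emph{exponentially fine} scales, both over the integers and modulo small moduli. But the $N_v$ are highly correlated integer random variables: they are the spanning-tree enumerators of one and the same random digraph, reduced-Laplacian minors tied together by the Matrix--Tree theorem, so showing that their linear combinations do not concentrate near prescribed values calls for control of the joint fine-scale distribution of spanning-tree counts of a random digraph, which appears to be beyond current techniques. A secondary difficulty is to describe the terminal strongly connected component of a random $k$-out-regular digraph precisely enough to run the argument inside it. By contrast, the reduction via Theorem~\ref{pr:blocks}, the tree-theorem identity, and the exponential lower bound on the $N_v$ are routine.
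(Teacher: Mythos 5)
The statement you were asked to prove is not a theorem of this paper: it is an open conjecture (Conjecture~3 of~\cite{GuSz2015}, restated here), and the authors give no proof --- only the remark that progress might come from studying eigenvectors, together with a companion conjecture that the eigenvector of a random $k$-out-regular digraph is non-partitionable with probability tending to $1$. Your proposal follows exactly that suggested route: reduce via Theorem~\ref{pr:blocks} to non-partitionability of $\vec{w}$, express $\vec{w}$ through the spanning in-arborescence counts $N_v$ via the Markov chain tree theorem, and then treat the $N_v$ as a ``hard-phase'' instance of number partitioning. This is a sensible and well-organised programme, but it is not a proof, and you say so yourself: the load-bearing step is a multidimensional anti-concentration (local limit) estimate at exponentially fine scales for the jointly dependent integers $N_v$, which are minors of one and the same random reduced Laplacian. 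No such estimate is established or cited, so the per-partition failure probabilities $\rho^{-(\ell-1)n}$ are unsubstantiated and the entire union bound has no foundation.

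Two further gaps deserve explicit mention. First, even granting anti-concentration, the constant $\rho>2$ is asserted rather than derived; the paper's own proposition gives only the upper bound $(2k^2)^{(n-1)/2}$ on the entries of $\vec{w}$, and for $k=2$ it is not clear that the typical block sums live at a scale finer than $2^{-n}$, which is precisely what your $\ell=2$ union bound over $2^n$ subsets needs. Second, Theorem~\ref{pr:blocks} concerns primitive digraphs, so your reduction to the terminal strongly connected component requires additional unproved facts: that a random $k$-out-regular digraph has with high probability a unique terminal component, that this component is primitive and $k$-out-regular, and that total synchronization of the whole digraph follows from that of the component. In short, your attempt correctly reconstructs the strategy the authors themselves propose and honestly locates the obstruction, but the conjecture remains open; what you have written is a research plan, not a proof.
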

We believe that significant progress on this conjecture can be made through the study of the eigenvectors of digraphs. Despite the fact that the statement of theorem~\ref{pr:blocks} gives only a necessary condition for a digraph to be totally synchronizing we expect it to hold in majority of cases. Namely, we propose the following conjecture:
\begin{conjecture}
The eigenvector of a random $k$-out-regular digraph with $n$ vertices has no partition into blocks of equal sum with probability 1 as $n$ goes to infinity.
\end{conjecture}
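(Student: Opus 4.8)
The statement is an asymptotic claim about a random object, so the plan below is a programme rather than a short argument. Throughout, let $G$ be a uniformly random $k$-out-regular digraph on $[n]=\{1,\dots,n\}$, conditioned on being primitive (which is needed for $\vec w$ to be defined), let $A$ be its adjacency matrix, and let $\vec w$ be the positive integer eigenvector normalized by $\gcd_i\vec w[i]=1$, so that $\vec w(A-kI)=0$; put $W=\sum_i\vec w[i]$.

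\emph{Step 1 (linearization and reduction).} A partition of $\vec w$ into $m\ge 2$ blocks exists exactly when there are an integer $m$ with $2\le m\le n$ and a set $\emptyset\ne B\subsetneq[n]$ such that $m\sum_{i\in B}\vec w[i]=W$, i.e. $\langle\vec c_{B,m},\vec w\rangle=0$, where $\vec c_{B,m}$ equals $m-1$ on $B$ and $-1$ off $B$. A union bound then gives
\[
\Pr[\,\vec w\text{ partitionable}\,]\ \le\ \sum_{m=2}^{n}\ \sum_{\emptyset\ne B\subsetneq[n]}\ \Pr\big[\langle\vec c_{B,m},\vec w\rangle=0\big].
\]
So the goal reduces to an anti-concentration estimate: for some $\delta>0$ and all large $n$,
\[
\max_{2\le m\le n}\ \max_{\emptyset\ne B\subsetneq[n]}\ \Pr\big[\langle\vec c_{B,m},\vec w\rangle=0\big]\ \le\ 2^{-(1+\delta)n}.
\]
The extreme patterns ($|B|\in\{1,2\}$ or $m\in\{2,n\}$) reduce to elementary facts about one or two coordinates of $\vec w$ and can be disposed of first; the content is the bulk of the patterns.

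\emph{Step 2 (anti-concentration of a single linear form in the eigenvector).} The key structural input is that, by the Markov chain tree theorem, $\vec w$ is — up to sign and its gcd — the vector of principal $(n-1)$-minors of $A-kI$, equivalently the vector $(\tau_i)_i$ of edge-multiplicity-weighted counts of spanning in-arborescences of $G$ rooted at $i$; and $\langle\vec c,\vec w\rangle=0$ holds precisely when $\vec c$ lies in the row space of the random matrix $A-kI$. The plan is to run a martingale/smoothing argument: exposing the $k$ out-edges of the vertices one vertex at a time, re-randomizing the out-edges of a small but positive fraction of vertices while fixing the rest perturbs $\vec w$ through the fixed-point relation $k\,\vec w[j]=\sum_{i\to j}\vec w[i]$, and one would hope to show that this injects $\Omega(n)$ bits of entropy into the value of $\langle\vec c,\vec w\rangle$, giving an $e^{-\Omega(n)}$ bound for an arithmetically generic pattern $\vec c$. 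Patterns $\vec c_{B,m}$ concentrated on a short arithmetic progression escape this and must be handled separately, by controlling $W$ modulo small integers; this is the inverse-Littlewood--Offord dichotomy used to bound least singular values of sparse random matrices (in the spirit of Rudelson--Vershynin, Tao--Vu, Basak--Rudelson, Litvak--Pajor--Rudelson--Tomczak-Jaegermann), here adapted to an eigenvector.

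\emph{The main obstacle.} The crux is precisely that adaptation. Unlike in the usual least-singular-value setting, $\vec w$ is a strongly nonlinear function of $G$ — a local change of edges alters all coordinates through the fixed-point equation — so the standard Littlewood--Offord move of conditioning on one independent coordinate is unavailable, and the $n\,2^n$-term union bound forces an anti-concentration rate \emph{below} $2^{-n}$, which is stronger than ordinary $\ell^\infty$-delocalization of the top eigenvector. There is at least circumstantial support that such a rate is attainable: by Hadamard's inequality the relevant minors are at most $(2k^2)^{(n-1)/2}$ (as in the proposition above), one expects them to be genuinely of size $k^{\Theta(n)}$ with gcd $O(1)$ — so $W$ is exponential in $n$ with high probability — and an anti-concentration bound at scale $1/W$ would then suffice. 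Establishing that the in-arborescence vector $(\tau_i)_i$ is \emph{arithmetically spread out} at this exponential scale, not merely delocalized in norm, is in my view the essential difficulty, and is what keeps the statement a conjecture.
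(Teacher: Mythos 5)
This statement is presented in the paper as an open conjecture; the authors give no proof, so there is no argument of theirs for you to have missed. Your proposal should therefore be judged on whether it closes the conjecture on its own, and it does not. Step~1 is fine as far as it goes: the union bound over patterns $\vec c_{B,m}$ is valid (only the forward implication ``partitionable $\Rightarrow$ some block $B$ satisfies $m\sum_{i\in B}\vec w[i]=W$'' is needed), and it correctly identifies that a per-pattern anti-concentration rate strictly below $2^{-n}$ is what the union bound demands. But Step~2, which carries the entire weight of the argument, is never established: the claim that re-randomizing the out-edges of a positive fraction of vertices injects $\Omega(n)$ bits of entropy into $\langle\vec c,\vec w\rangle$ is stated as something ``one would hope to show,'' and the treatment of arithmetically structured patterns via an inverse Littlewood--Offord dichotomy is only gestured at. As you yourself observe, the nonlinearity of $\vec w$ as a function of the edge set blocks the standard conditioning-on-one-coordinate move, and no substitute is supplied. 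The auxiliary facts you lean on (that $W$ is exponentially large with high probability, that the gcd of the in-arborescence counts is $O(1)$) are also unproven and would themselves require work.

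So the honest verdict is that this is a credible research programme with a correctly identified central difficulty, not a proof; the conjecture remains open after your argument exactly where it was before it. If you want to extract something provable from this framework now, the low-hanging fruit is the extreme patterns you mention in passing (e.g.\ $m=W$, which forces $\vec w=(1,\dots,1)$ and hence an Eulerian digraph, an event of vanishing probability), and perhaps a conditional statement of the form ``if the anti-concentration estimate of Step~2 holds, then the conjecture follows,'' which is essentially what Steps~1--2 amount to.
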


Recall that the \emph{synchronizing ratio} of a $k$-out-regular digraph $G$ is ratio of the number of synchronizing colorings to the number $(k!)^n$ of all possible colorings. In the remainder of this section we are going to shed some light on another conjecture made in~\cite[Conjecture 1]{GuSz2015}:
\begin{conjecture}
\label{conj:ratio}
The minimum value of the synchronizing ratio among all $k$-out-regular digraphs with $n$ vertices is equal to $\frac{k-1}{k}$, except for the case $k=2$ and $n=6$ when it is equal to $\frac{30}{64}$.
\end{conjecture}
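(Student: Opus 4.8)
The plan is to prove the conjecture in the sub-class singled out in the abstract --- the digraphs whose eigenvector admits a \emph{unique} partition --- and to organise the general case as a count of non-synchronizing colorings indexed by partitions of the eigenvector. Since a digraph whose eigenvector is not partitionable has all colorings synchronizing (the corollary to Theorem~\ref{th:partition}), I may assume $\vec w$ has a partition $\beta=\{B_1,\dots,B_\ell\}$ with $\ell\ge 2$ and common block-weight $b$, and for the first part that $\beta$ is the only such partition. The first step is a structural reduction: from the weight-balance identity $\sum_i p_i\,\wg(Sa_i^{-1})=\wg(S)$ used in the proof of Theorem~\ref{th:partition} one reads off that $b$ is exactly the maximal weight of a synchronizing subset, so for \emph{every} non-synchronizing coloring $\mathrsfs A$ of $G$ the minimal rank is $\ell$, every rank-$\ell$ word (one whose image has $\ell$ states) has a kernel that is a partition of $\vec w$ into $\ell$ blocks of weight $b$, and its image is a transversal of that partition. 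Uniqueness then forces this kernel to be $\beta$ itself, for every rank-$\ell$ word of every non-synchronizing coloring of $G$.

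The second step turns uniqueness into the bound $\frac{k-1}{k}$ by a local recoloring argument. Partition all $(k!)^n$ colorings into classes according to the coloring of every state except one canonically chosen vertex $v$ (say the smallest state lying in a non-singleton block of $\beta$); each class contains exactly $k!$ colorings, one per bijection from the $k$ out-edges of $v$ to $\Sigma$. The key claim is that at most $(k-1)!$ of them are non-synchronizing, which immediately gives that the number of non-synchronizing colorings of $G$ is at most $\frac1k(k!)^n$, i.e.\ synchronizing ratio at least $\frac{k-1}{k}$. To prove the claim I would show that the non-synchronizing colorings in a class all assign the same color to one distinguished out-edge of $v$: since $\beta$ must remain the rank-$\ell$ kernel, and a rank-$\ell$ word collapses every block to a single state, routing the ``wrong'' out-edge of $v$ to a block whose class its endpoint cannot be merged into makes every rank-$\ell$ kernel differ from $\beta$ and hence forces synchronization; establishing that there is \emph{exactly one} compatible edge--color incidence at $v$ is where the uniqueness hypothesis and the transversal structure from the first step enter.

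For the full conjecture one drops uniqueness and bounds the number of non-synchronizing colorings by $\sum_\beta M_G(\beta)$, where $\beta$ runs over the partitions of $\vec w$ and $M_G(\beta)$ counts the colorings associated with $\beta$ through Theorem~\ref{th:partition}; the goal becomes an estimate $M_G(\beta)\le\frac1k(k!)^n$ together with an inclusion--exclusion type argument showing that these families barely overlap, so that the total still stays below $\frac1k(k!)^n$ for all but finitely many $(k,n)$. The residual small cases --- notably $k=2,\ n=6$, whose extremal digraph is Eulerian (eigenvector $(1,\dots,1)$) and hence lies outside the uniqueness regime --- would be disposed of by direct enumeration. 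The hard part, already visible in the automaton $\mathrsfs B$, is that $\beta$ need not be a congruence of $\mathrsfs A$: non-synchronizing colorings do not biject with non-synchronizing colorings of a quotient, so one has to marry the kernel characterisation of Theorem~\ref{th:partition} with a Birkhoff--von Neumann / Kari-style count for the block-respecting part of a coloring while keeping the block-crossing edges under control, and the interaction of several partitions in the non-unique case is precisely what currently keeps the full conjecture out of reach.
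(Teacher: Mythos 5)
The statement you were asked about is a \emph{conjecture} (Conjecture~1 of~\cite{GuSz2015}); the paper does not prove it and only establishes the partial result Theorem~\ref{th:half}: if the block partition of $\vec w$ is unique, the synchronizing ratio is at least $\frac{k-1}{k}$. You correctly scope your attempt down to that case, and your overall strategy (partition the $(k!)^n$ colorings into local recoloring classes and bound the non-synchronizing fraction in each class by $\frac1k$) is the same as the paper's. But the execution has concrete gaps.

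First, your key claim --- that the non-synchronizing colorings within a class all assign the same color to one distinguished out-edge of $v$, hence number at most $(k-1)!$ --- is false in general. If $v$ has, say, two out-edges into block $B_1$ and two into $B_2$ (with $k=4$), the non-synchronizing colorings are not characterized by fixing one edge--color incidence; what is actually forced is that the \emph{set} of letters assigned to the edges into each block is fixed, giving $k_1!k_2!\cdots k_\ell!$ admissible colorings, where $k_i$ is the number of out-edges of $v$ into the $i$th block. This happens to satisfy $k_1!\cdots k_\ell!\le (k-1)!$ when $\ell\ge2$, but the structural statement you rely on is wrong, and proving the correct one requires the paper's Lemma~\ref{lem:unique}: uniqueness of the partition forces $\beta$ to be a \emph{congruence} for every non-synchronizing coloring. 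That congruence property, not the kernel description alone, is what turns ``non-synchronizing'' into a local, countable condition on the out-edges of $v$. Second, your choice of $v$ as ``the smallest state in a non-singleton block'' breaks down exactly in the cases the paper must treat separately: when $\beta$ consists only of singletons ($\vec w=(1,\dots,1)$, $G$ Eulerian with a prime number of vertices, where one instead picks a vertex with two distinct out-neighbours and uses that every letter of a non-synchronizing coloring is a permutation); when $\beta$ is a congruence for every coloring (where one must pass to the Eulerian factor automaton); and when all out-edges of your chosen $v$ land in a single block, in which case recoloring at $v$ alone gives no information and the paper recolors at two states $p,q$ of the same block simultaneously. Your sketch of the general (non-unique) case is, as you acknowledge, a research programme rather than a proof, which is consistent with the statement remaining open in the paper.
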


\begin{lemma}
\label{lem:unique}
Let $\mathrsfs{A}$ be a non-synchronizing automaton with the eigenvector $\vec{w}$ in accordance with a distribution $p_1,p_2, \ldots, p_k$. Let $b$ be the weight of a maximal synchronizing subset. The block partition of the eigenvector $\vec{w}$ into subsets of weight $b$ is unique if and only if the partition into maximal synchronizing subsets of weight $b$ is a congruence.
\end{lemma}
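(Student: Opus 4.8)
The plan is to fix, once and for all, a partition $\pi$ of the states of $\mathrsfs{A}$ into synchronizing subsets of the maximal weight $b$ together with a word $x_0$ satisfying $\pi=\ker(x_0)$; both are furnished by Theorem~\ref{th:partition}. The one structural input I will lean on is the fact isolated inside the proof of Theorem~\ref{th:partition}: if $S$ is a synchronizing subset of maximal weight and $y$ is any word, then the full preimage $Sy^{-1}$ is again a synchronizing subset of weight exactly $b$ --- it is synchronizing because preimages of synchronizing sets are synchronizing, and reading the identity $\sum_i p_i\,\wg(Sa_i^{-1})=\wg(S)$ letter by letter along $y$ shows, by maximality of $b$, that each intermediate preimage has weight exactly $b$. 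I will also use that the entries of $\vec w$ are strictly positive, so that $T\subseteq T'$ and $\wg(T)=\wg(T')$ already force $T=T'$.

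For the implication ``$\pi$ is a congruence $\Rightarrow$ the partition into maximal synchronizing subsets is unique'' I would show that under the congruence hypothesis the maximal synchronizing subsets are \emph{exactly} the blocks of $\pi$. One inclusion is the choice of $\pi$. For the other, let $S$ be a maximal synchronizing subset, pick a word $y$ and a state $q$ with $Sy=\{q\}$, and let $B$ be the block of $\pi$ containing $q$. By the preimage fact $By^{-1}$ is a maximal synchronizing subset of weight $b$; and since $\pi$ is a congruence and $B$ is a whole block, $By^{-1}$ is a union of blocks of $\pi$ (if $py\in B$ and $p,p'$ lie in a common block of $\pi$, then $p'y$ lies in the block of $\pi$ containing $py$, hence $p'y\in B$). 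Every block of $\pi$ has weight $b$ and all weights are positive, so $By^{-1}$ is a single block of $\pi$. Since $Sy=\{q\}\subseteq B$ we get $S\subseteq By^{-1}$, and $\wg(S)=b=\wg(By^{-1})$, whence $S=By^{-1}$ is a block of $\pi$. Thus any partition of $Q$ into maximal synchronizing subsets has the blocks of $\pi$ as its parts and therefore coincides with $\pi$; in particular it is unique, so ``the'' partition into maximal synchronizing subsets is well defined.

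For the converse, ``uniqueness $\Rightarrow$ $\pi$ is a congruence'', I would test $\pi=\ker(x_0)$ against an arbitrary letter $a$. The blocks of $\ker(ax_0)$ are exactly the sets $Ba^{-1}$ with $B$ a block of $\ker(x_0)$, and by the preimage fact each such set is a maximal synchronizing subset of weight $b$; hence $\ker(ax_0)$ is again a partition of $Q$ into maximal synchronizing subsets. Uniqueness forces $\ker(ax_0)=\ker(x_0)$, which says precisely that $ix_0=jx_0$ implies $(ia)x_0=(ja)x_0$, i.e.\ $\pi$ is compatible with $a$. As $a$ was arbitrary, $\pi$ is a congruence.

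The delicate point is the first implication, and inside it the passage from the weight equality $\wg(S)=\wg(By^{-1})$ to the set equality $S=By^{-1}$: this uses both the inclusion $S\subseteq By^{-1}$ and strict positivity of the eigenvector, and it rests on having first pinned $By^{-1}$ down to a single block of $\pi$ --- rather than a larger union of blocks --- via the observation that every block has the same weight $b$. The second implication is a formal consequence of Theorem~\ref{th:partition} together with the preimage fact, and I expect no real difficulty there.
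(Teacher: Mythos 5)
Your proof is correct, and one half of it takes a genuinely different route from the paper's. For the direction ``uniqueness $\Rightarrow$ congruence'' you and the paper do essentially the same thing: the preimage of the maximal-weight partition under a letter is again a partition into maximal synchronizing subsets of weight $b$, so uniqueness forces it to coincide with the original; the paper phrases this with the sets $[p\ell]\ell^{-1}$, you with $\ker(ax_0)$ versus $\ker(x_0)$, which is a tidy way to see that the preimage family really is a partition. For the converse the paper argues by contradiction: a second partition $\sigma$ into maximal synchronizing subsets yields states $p\sim_\sigma q$ with $p\nsim_\tau q$, and the congruence property then makes $[p]\cup[q]$ a synchronizing set of weight $2b$, violating maximality of $b$. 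You instead prove the stronger structural fact that under the congruence hypothesis \emph{every} maximal synchronizing subset is literally a block of $\pi$: the preimage $By^{-1}$ is simultaneously a weight-$b$ set and a union of weight-$b$ blocks, hence a single block, which must equal $S$ by positivity of $\vec w$. Both arguments are sound; yours buys a sharper conclusion (a complete description of the maximal synchronizing subsets, not merely of the partitions into them) at the cost of invoking the preimage fact for arbitrary words rather than single letters. One caveat, which you share with the paper's own proof: in this backward direction both arguments establish uniqueness of the partition into \emph{maximal synchronizing subsets}, not uniqueness of arbitrary block partitions of $\vec w$ into weight-$b$ subsets as the statement literally reads; the literal version can fail (e.g.\ an Eulerian congruence example with $\vec w=(1,1,1,1)$ and $b=2$ admits three combinatorial partitions of the coordinates), so the lemma must be read in the weaker sense. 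Your forward direction, the one actually applied in Theorem~\ref{th:half}, does use the literal uniqueness hypothesis correctly.
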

\begin{proof}
Assume first that the partition into blocks of $\vec{w}$ is unique. We will denote the block of an element $p$ by $[p]$. If the partition is not a congruence, then there exists a letter $\ell$ such that $[p]=[q]$ and $[p \ell] \neq [q \ell]$ for some states $p$ and $q$. Note, that the preimage of a maximal synchronizing subset is also a maximal synchronizing subset (see the proof of theorem~\ref{th:partition}). Moreover, the preimage of a partition into maximal synchronizing subsets is also a partition into maximal synchronizing subsets. Thus, $[p\ell] \ell^{-1}$ is a maximal synchronizing subset and $[p\ell]\ell^{-1} \cap [p] \neq \varnothing$. We also have $[p \ell ] \ell^{-1} \neq [p]$, otherwise $[p \ell]=[p]\ell$ which implies $[p \ell] = [q \ell]$. Therefore, the preimage of the partition by the letter $\ell$ is a different partition into maximal synchronizing subsets. A contradiction.

Assume now that the partition $\tau$ is a congruence. Assume to the contrary that there is another partition $\sigma$ into synchronizing subsets of maximal weight. Note, that there are states $p$ and $q$ such that $p \sim_{\sigma} q$ and $p \nsim_{\tau} q$, otherwise $\sigma$ is a refinement of $\tau$, and $\sigma$ is not a partition into synchronizing subsets of maximal weight. Since $p \sim_{\sigma} q$ there exists a word $u$ such that $p u = q u$. Let $[p]$ and $[q]$ be the blocks of the partition $\tau$ of $p$ and $q$ respectively. Since $\tau$ is a congruence both $[p] u$ and $[q] u$ are subsets of the same block $[r]$ for some state $r$. The subset $[r]$ is synchronizing. Therefore, the subset $[p] \cup [q]$ is also synchronizing, which contradicts maximality of $[p]$ and $[q]$.
\end{proof}
\begin{corollary}
\label{co:congruence}
A digraph $G$ with the eigenvector $\vec{w}$ is totally synchronizing if the following statements hold:
\begin{enumerate}
\item if there exists a partition of $\vec{w}$ into blocks of weight $b$, then it is unique;
\item every partition of $\vec{w}$ is not a congruence for every coloring.
\end{enumerate}
\end{corollary}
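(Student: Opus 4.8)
The plan is to argue by contradiction: I would assume that some coloring $\mathrsfs{A}$ of $G$ fails to be synchronizing and derive a violation of one of the two hypotheses. First I would invoke Theorem~\ref{th:partition}, applied to $\mathrsfs{A}$ with the uniform distribution (the distribution defining the eigenvector $\vec{w}$ of the digraph $G$). This yields a partition $\pi$ of the states of $\mathrsfs{A}$ into synchronizing subsets of maximal weight, which moreover coincides with the kernel of some word. Writing $b$ for this maximal weight, every block of $\pi$ has weight exactly $b$, so $\pi$ is in particular a partition of $\vec{w}$ into blocks of equal sum $b$; in particular $\vec{w}$ is partitionable.

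Next, hypothesis~(1) tells us that the partition of $\vec{w}$ into blocks of weight $b$ is unique. I would then apply Lemma~\ref{lem:unique} to the non-synchronizing automaton $\mathrsfs{A}$, taking $b$ to be the weight of a maximal synchronizing subset: uniqueness of the block partition of weight $b$ is equivalent to the partition into maximal synchronizing subsets of weight $b$ being a congruence of $\mathrsfs{A}$. Since by (1) there is only one partition of $\vec{w}$ into blocks of weight $b$, the partition into maximal synchronizing subsets of weight $b$ is unique and must equal $\pi$. Hence $\pi$ is a congruence for the coloring $\mathrsfs{A}$.

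Finally, this contradicts hypothesis~(2): $\pi$ is a partition of $\vec{w}$ (its blocks all have weight $b$) that happens to be a congruence for the coloring $\mathrsfs{A}$ of $G$, whereas (2) asserts that no partition of $\vec{w}$ is a congruence for any coloring of $G$. Therefore no coloring of $G$ can be non-synchronizing, i.e.\ $G$ is totally synchronizing.

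There is essentially no hard step once Theorem~\ref{th:partition} and Lemma~\ref{lem:unique} are available; the corollary is just their conjunction. The only point I would take care over is bookkeeping around the constant $b$: one must apply Lemma~\ref{lem:unique} with $b$ equal to the maximal weight of a synchronizing subset of the particular non-synchronizing coloring $\mathrsfs{A}$, and observe that the partition delivered by Theorem~\ref{th:partition} is precisely the ``partition into maximal synchronizing subsets of weight $b$'' appearing in the lemma (uniqueness from (1) guarantees there is no ambiguity about which partition this is). Given these observations, the statement follows immediately.
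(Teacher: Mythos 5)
Your proof is correct and follows exactly the route the paper intends (the paper leaves this corollary unproved as an immediate consequence of Theorem~\ref{th:partition} and Lemma~\ref{lem:unique}): a non-synchronizing coloring yields a block partition of $\vec{w}$ of some weight $b$, hypothesis~(1) plus the lemma forces that partition to be a congruence, contradicting hypothesis~(2). Your care about which $b$ to feed into the lemma is exactly the right bookkeeping; nothing is missing.
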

Using this statements we can obtain a result that supports conjecture~\ref{conj:ratio}:
\begin{theorem}
\label{th:half}
If the block partition of the eigenvector $\vec{w}$ is unique, then the synchronizing ratio of every $k$-out-regular digraph in $\mathcal{G}(\vec{w})$ is at least $\frac{k-1}{k}$.
\end{theorem}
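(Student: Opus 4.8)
The plan is to fix a $k$-out-regular digraph $G \in \mathcal{G}(\vec{w})$ with unique block partition $\beta = \{B_1, \ldots, B_\ell\}$, and to show that at least a $\frac{k-1}{k}$ fraction of its colorings are synchronizing. By Corollary~\ref{co:congruence} (and Lemma~\ref{lem:unique}), a coloring $\mathrsfs{A}$ of $G$ fails to be synchronizing only if the \emph{unique} partition $\beta$ happens to be a congruence for $\mathrsfs{A}$: indeed if $\mathrsfs{A}$ is non-synchronizing then by Theorem~\ref{th:partition} there is a partition into synchronizing subsets of maximal weight, this partition is a partition of $\vec{w}$ into blocks of equal weight, hence by uniqueness it must equal $\beta$, and then by Lemma~\ref{lem:unique} $\beta$ is a congruence. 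So it suffices to bound the fraction of colorings for which $\beta$ is a congruence by $\frac{1}{k}$.

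Next I would estimate this fraction directly on the level of individual vertices. For a coloring, $\beta$ is a congruence precisely when, for every letter $x$ and every block $B_s$, the set $B_s x$ is contained in a single block. I would count colorings vertex by vertex: we choose, independently for each vertex $q$, a bijection between the $k$ letters and the $k$ out-edges at $q$ (so there are $(k!)^n$ colorings total). Pick one designated vertex $q_0$ from some block, say $q_0 \in B_1$, and let $q_1$ be another vertex of $B_1$ (if $B_1$ is a singleton, pass to a block that is not — such a block exists unless $\vec{w} = (1,\ldots,1)$, in which case $G$ is Eulerian and the claim needs the separate classical argument, or in fact every coloring with $|B_1|=1$ forcing all blocks to be singletons makes $\beta$ the identity congruence, handled similarly). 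The key claim is: once we fix the colorings at all vertices other than $q_0$, the number of choices of the bijection at $q_0$ that are compatible with "$\beta$ is a congruence" is at most $(k-1)!$, i.e. at least one of the $k!$ local choices at $q_0$ is ruled out. This immediately gives that the fraction of congruence-inducing colorings is at most $\frac{(k-1)!}{k!} = \frac{1}{k}$, and hence the synchronizing ratio is at least $\frac{k-1}{k}$.

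The heart of the argument — and the main obstacle — is proving that local claim: that for \emph{every} fixed coloring of $G \setminus \{q_0\}$, at least one bijection at $q_0$ destroys the congruence property. The intuition is that since $G$ is primitive and has out-degree $k$, the vertex $q_0$ has $k$ out-edges (with multiplicity), and these must land in at least two distinct blocks for \emph{some} reason tied to primitivity; indeed, if from some block $B_s$ all $k \cdot |B_s|$ out-edges (counted with the colors forced on the other vertices plus the free choice at $q_0$) could be forced into a single block, one would contradict primitivity, or more precisely contradict that $\vec{w}$ is a (unique-partition) eigenvector via the weight-balance identity $\sum_i c_{ij}\vec{w}[i] = k\,\vec{w}[j]$. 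I would argue that the out-edges of $q_0$ reach at least two blocks, so there exist two colors $x, y$ and two out-edges $e_x, e_y$ at $q_0$ whose heads lie in different blocks; if additionally $q_1 \in B_1$ sends color $x$ into the same block as $e_x$'s head and color $y$ into a block $\ne$ that of $e_y$'s head (which can be arranged by the pigeonhole/counting on the already-fixed part, or by symmetry between the roles of $x$ and $y$), then swapping $x \leftrightarrow y$ at $q_0$ cannot preserve the congruence. Making this exchange argument fully rigorous — in particular ensuring a suitable pair $(x,y)$ and partner vertex $q_1$ always exists regardless of the fixed part — is where the real work lies; the structural input is the uniqueness of $\beta$ (ruling out that a "bad" swap merely produces a different valid partition) together with the primitivity of $G$.
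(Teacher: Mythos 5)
Your reduction (non-synchronizing $\Rightarrow$ the unique partition $\beta$ is a congruence, via Theorem~\ref{th:partition}, uniqueness, and Lemma~\ref{lem:unique}) is correct and is also how the paper starts, but the counting step you build on it does not work as stated. First, ``the number of congruence-compatible bijections at $q_0$ is at most $(k-1)!$'' is not equivalent to ``at least one of the $k!$ choices is ruled out'': ruling out one choice leaves $k!-1$, not $(k-1)!$. Moreover, the exchange argument you sketch (pair each compatible bijection with the one obtained by swapping two colors $x\leftrightarrow y$) can at best show that at most \emph{half} of the local choices are compatible, which gives a synchronizing ratio of $\tfrac12$, not $\tfrac{k-1}{k}$. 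To get the factor $\tfrac1k$ you need the exact count: once a partner vertex $p$ in the same non-singleton block has its colors fixed, each letter's target block at $q_0$ is forced, so the compatible bijections number $k_1!\cdots k_\ell!\le (k-1)!$ where $k_i$ is the number of out-edges of $q_0$ into block $B_i$ and at least two $k_i$ are positive. This is essentially what the paper does (freeing both $q$ and $p$ and counting $k!\,k_1!\cdots k_\ell!$ congruence-compatible pairs out of $(k!)^2$), and it is a genuinely different computation from a single-swap injection.

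Two whole cases are also missing. (i) If $\beta$ is a congruence for \emph{every} coloring of $G$ --- i.e., all out-edges of each vertex of a non-singleton block land in a single block --- then your local claim is simply false: no bijection at $q_0$ can destroy the congruence, and primitivity does not save you (the paper shows $\beta$ must then have a singleton block, but the non-singleton blocks still have single-block images). The paper handles this by passing to the factor automaton $\mathrsfs{A}/\beta$, which is Eulerian with a prime number of states, and reducing to the Eulerian case. (ii) The all-singletons case $\vec{w}=(1,\dots,1)$ cannot be ``handled similarly'' by the congruence reduction, because the identity relation is always a congruence, so that reduction is vacuous there. The paper's argument is specific: uniqueness of the partition forces $n$ to be prime, so in any non-synchronizing coloring every letter acts as a permutation by Theorem~\ref{th:partition}; one then counts the label permutations at a vertex with at least two distinct out-neighbours that preserve the all-permutations property, again getting the ratio $k_1!\cdots k_\ell!/k!\le 1/k$. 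Without these two cases and the corrected count, the proof is incomplete.
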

\begin{proof}
Let $\beta$  be the unique block partition of $\vec{w}$ and $G$ be a digraph in  $\mathcal{G}(\vec{w})$.
Assume first that the partition consists only of singletons. It implies that $\vec{w} = (1,1 \ldots, 1)$ and $G$ is an Eulerian digraph with a prime number of vertices. Note, that there exist a vertex $q$ and edges $(q,r),(q,s)$ for $r\neq s$, otherwise $G$ is not primitive. Let all the $k$ outgoing edges of $q$ be $(q, p_1)$ of multiplicity $k_1$, $(q, p_2)$ of multiplicity $k_2$, $\ldots$, $(q, p_\ell)$ of multiplicity $k_\ell$. Let $\mathcal{A}$ be a set of colorings of $G$ that have the same labelling of all edges, except for the outgoing edges of $q$. In order to show that the synchronizing ratio of $G$ is at least $\frac{k-1}{k}$ we will demonstrate that the fraction of non-synchronizing automata in $\mathcal{A}$ is at most $\frac{1}{k}$. If all automata in $\mathcal{A}$ are synchronizing, then the statement holds true. Otherwise, let $\mathrsfs{A} \in \mathcal{A}$ be a non-synchronizing automaton. Since the number of states is prime we conclude by theorem~\ref{th:partition} that every letter of $\mathrsfs{A}$ acts as a permutation on the set of states. Note, that if edges $(q,p_1)$ and $(q,p_2)$ are labelled by $x$ and $y$ respectively, then the automaton $\mathrsfs{A'}$ obtained by flipping the labels on these edges, i.e. assigning letter $y$ to $(q,p_1)$ and letter $x$ to $(q,p_2)$, is synchronizing. 
Indeed, either $p_1$ or $p_2$ is not equal to $q$, without loss of generality we will assume that $p_1 \neq q$. Since every letter in $\mathrsfs{A}$ acts as a permutation, there exists a state $r$ such that $r y = p_1$. Thus, $r y = q y$ for the automaton $\mathrsfs{A'}$ and it is synchronizing by theorem~\ref{th:partition}. More generally, there are at most $k_1!k_2!\ldots k_\ell !$ permutations of labels on the outgoing edges of $q$ that keep the resulting automaton non-synchronizing. Since the value of the fraction $\frac{k_1!k_2!\ldots k_\ell !}{k!}$ is bounded by $\frac{1}{k}$ we obtain the desired statement.

Assume now that for every coloring $\mathrsfs{A}$ of $G$ the partition $\beta$ is a congruence. It implies that for every non-singleton block $B$ there is a block $B'$ such that all the outgoing edges of $B$ arrive to $B'$. Thus, $\beta$ has a singleton block, otherwise $G$ is a union of cycles which can't be primitive. We easily obtain the following statement: a coloring $\mathrsfs{A}$ of $G$ is synchronizing if and only if the factor automaton $\mathrsfs{A}/\beta$ is synchronizing. Note, that the factor automaton $\mathrsfs{A}/\beta$ is Eulerian with a prime number of states. Thus, by the preceding case we conclude that the synchronizing ratio is at most $\frac{k-1}{k}$.

Assume now that there exists a coloring $\mathrsfs{A}$ such that $\beta$ is not a congruence. Thus, there are states $q,p$ from the same block $B$ and a letter $\ell$ such that $q\ell$ and $p\ell$ belong to different blocks of $\beta$. If for all colorings of $G$ the partition $\beta$ is not a congruence, then $G$ is totally synchronizing by corollary~\ref{co:congruence} and the statement of the theorem holds. Thus, we assume that there is a coloring which is a congruence for $\beta$. It implies that for every block $B'$ there is the same number of edges going from $q$ to $B'$ and from $p$ to $B'$. Let $k_1$ be the number of edges going from $q$ to $B_1$, $k_2$ be the number of edges going from $q$ to $B_2$, \ldots, $k_\ell$ be the number of edges going from $q$ to $B_\ell$. Let $\mathcal{A}$ be a set of colorings of $G$ that have the same labelling of all edges, except for the outgoing edges of $q$ and $p$. We will show that the fraction of non-synchronizing automata in $\mathcal{A}$ is at most $\frac{1}{k}$. It is not hard to see, that there are at most ${k \choose k_1}(k_1!)^2{k-k_1 \choose k_2}(k_2!)^2\ldots {k_\ell \choose k_\ell}(k_\ell !)^2$ automata in $\mathcal{A}$ for which $\beta$ is a congruence. If $\beta$ is not a congruence for a given coloring, then by lemma~~\ref{lem:unique} we conclude that this coloring is synchronizing. Therefore, after the simplification we conclude that the fraction of non-synchronizing automata is at most $\frac{k_1!k_2!\ldots k_\ell !}{k!}$. Thus, it is bounded by $\frac{1}{k}$.\qed
\end{proof}

\textbf{Acknowledgements:} We want to thank Fran\c{c}ois Gonze and Jarkko Kari for the helpful discussions.

\bibliographystyle{plain}
\bibliography{synchronization}

\end{document}